\newcommand{\ljomega}{\hat l^{j,\omega}}
\newcommand{\lNjomega}{l^{j,\omega}_N}
\newcommand{\ud}{\mathrm{d}}
\DeclareMathOperator{\supp}{supp}
\numberwithin{equation}{section}
\newtheorem{theorem}{Theorem}[section]
\newtheorem{lemma}[theorem]{Lemma}
\newtheorem{prop}[theorem]{Proposition}
\newtheorem{cor}[theorem]{Corollary}
\newtheorem{remark}[theorem]{Remark}
\theoremstyle{definition}
\newtheorem{assumptions}[theorem]{Assumptions}
\newcommand{\laplace}{- \mathop{}\!\mathbin\bigtriangleup}
\newcommand{\E}{\mathds{E}}
\newcommand{\llargest}{l_{N,>}^{1,\omega}}
\newcommand{\llargestzwei}{l_{N,>}^{2,\omega}}
\newcommand{\e}{\mathrm{e}}
\newcommand{\llargesttilde}{\tilde l_{N,>}^{\,1,\omega}}
\newcommand{\llargesttildezwei}{\tilde l_{N,>}^{\,2,\omega}}
\newcommand{\llargestk}[1]{l_{k,>}^{#1,\omega}}
\newcommand{\llargestkminuszwei}[1]{l_{k-2,>}^{#1,\omega}}
\numberwithin{equation}{section}
\begin{document}
	
	\allowdisplaybreaks[1]
	
	\thispagestyle{empty}
	
	\vspace*{1cm}
	
	\begin{center}
		
		{\Large \bf  On a condition for type-I Bose--Einstein condensation in random potentials in $d$ dimensions} \\

		\vspace*{1cm}
		
		{\large  Joachim~Kerner \footnote{E-mail address: {\tt joachim.kerner@fernuni-hagen.de}} }%
		
		\vspace*{5mm}
		
		Department of Mathematics and Computer Science\\
		FernUniversität in Hagen\\
		58084 Hagen\\
		Germany\\
		
		\vspace*{1cm}
		
		{\large Maximilian~Pechmann \footnote{E-mail address: {\tt mpechmann@utk.edu}}}
		
		\vspace*{5mm}
		
		Department of Mathematics\\
		University of Tennessee\\
		Knoxville, TN 37996\\
		USA
		
			\vspace*{1cm}
			
			{\large  Wolfgang~Spitzer \footnote{E-mail address: {\tt wolfgang.spitzer@fernuni-hagen.de}}}%
		
			\vspace*{5mm}
			
			Department of Mathematics and Computer Science\\
			FernUniversität in Hagen\\
			58084 Hagen\\
			Germany\\

	\end{center}
	
	\vfill
	
	\begin{abstract} In this paper we discuss Bose--Einstein condensation (BEC) in systems of pairwise non-interacting bosons in random potentials in $d$ dimensions. Working in a rather general framework, we provide a ``gap condition'' which is sufficient to conclude existence of type-I BEC in probability and in the $r$th mean. We illustrate our results in the context of the well-known (one-dimensional) Luttinger--Sy model. Here, whenever the particle density exceeds a critical value, we show in addition that only the ground state is macroscopically occupied.
	\end{abstract}
	
	\newpage

	\section{Introduction}
	
	Bose--Einstein condensation (BEC) is usually referred to as a macroscopic occupation of a one-particle state. Instead, a broader definition of BEC speaks of generalized Bose–Einstein condensation (g-BEC) where only a macroscopic occupation of an arbitrarily small energy band of one-particle states is required. Depending on the quantity of macroscopically occupied one-particle states in the condensate one then distinguishes three types of g-BEC: A condensation of type I is said to occur if the number of macroscopically occupied one-particle states is finite but at least one. If there are infinitely many such one-particle states, the condensation is said to be of type II. Lastly, a generalized condensate in which none of the one-particle states are macroscopically occupied is called a type-III condensate.
	
	Although it is generally believed that repulsive interactions between the particles should be taken into account when considering Bose gases in random potentials \cite{lenoble2004bose} (see also \cite{stolz1995localization}, \cite{germinet2005localization}, \cite{klein2007localization}, \cite{SeiringerWarzel}), the study of non-interacting Bose gases (Bose gases without interaction between the particles) is nevertheless important since general features of BEC in the considered environment are revealed \cite{lenoble2004bose}. Generally, a main step in the proof of the occurrence of g-BEC in non-interacting Bose gases is to show that a certain critical density is finite. To determine the type of the condensate, however, is more difficult since this task requires a fairly accurate knowledge on how the eigenvalues of the corresponding one-particle Hamiltonian at the bottom of the spectrum behave in the thermodynamic limit.
	
	In 1973 and 1974, Kac and Luttinger explored a non-interacting Bose gas in three dimensions in the presence of randomly distributed impurity centers which were assumed to represent hardcore potentials of finite range \cite{kac1973bose,kac1974bose}. As a result, Kac and Luttinger were able to show that g-BEC occurs in probability for sufficiently low temperatures or, equivalently, for sufficiently large particle densities. Furthermore, they \textit{conjectured} the existence of a macroscopic occupation of the ground state in this case but could not prove their assertion. They claimed that showing a certain lower bound for the gap between the two lowest eigenvalues of the respective one-particle random Schrödinger operator of the Bose gas would be sufficient to conclude a macroscopic occupation of the ground state. However, no further explanation was given to bolster this claim. Although it is indeed fairly easy to exclude a macroscopic occupation of excited states with the energy gap as assumed by Kac and Luttinger, it seems difficult to infer that a type-I BEC necessarily occurs. We suspect that Kac and Luttinger were not aware of the possibility that a type-III BEC \cite{van1982generalized, van1983condensation, van1986general, van1986general2} can occur.
	
	In this paper, we rigorously prove that a slightly stronger lower bound for the energy gap compared to the one assumed by Kac and Luttinger implies a macroscopic occupation of the ground state in probability and in the $r$th mean. Consequently, confirming this lower bound for a given random model then proves the so-called \textit{Kac--Luttinger conjecture} in that case. Our results are valid in all spatial dimensions $d \in \mathds{N}$. We formulate our requirements for the random potential in a rather general framework. All of them are fulfilled, for example, by a Poisson random potential on $\mathds R^d$, $d \in \mathds{N}$, that has a non-negative, bounded, and compactly supported function as its single-impurity potential. The one-dimensional Luttinger--Sy \cite{luttinger1973bose,luttinger1973low} model (with infinite as well as finite interaction strength, see also \cite{KPS182}) also meets the requirement. 
	
	We formulate our general model and our assumptions for the random potential in Section~\ref{Section Main Results}. There, we also state and prove our main results, i.e., Theorem~\ref{MainResult} and Corollary~\ref{Corollary type-I BEC in rth mean}. In Section~\ref{section Luttinger Sy infinite strength}, we apply our results of Section~\ref{Section Main Results} to the Luttinger--Sy model and show that in this model a type-I BEC, in which only the ground state is macroscopically occupied, occurs in probability and in the $r$th mean. For the convenience of the reader we provide miscellaneous results that we use for proofs in this paper in Appendix~\ref{Miscellaneous results}.

Finally, we remark that type-I BEC in the Luttinger--Sy model was previously investigated in \cite{LenobleZagrebnovLuttingerSy}. However, the authors used other methods and worked in a somewhat different setting resulting in a different thermodynamic limit. 

\section{General model and results} \label{Section Main Results}

Let $(\Omega,\mathscr A,\mathds{P})$ be a probability space. In this paper we shall consider self-adjoint random Schrödinger operators (acting as single-particle Hamiltonians) of the form 
\begin{equation}
H_{\omega}=-\Delta+V_{\omega}\ ,\quad \omega \in \Omega\ ,
\end{equation}
on $L^2(\mathds{R}^d)$. The restriction of $H_{\omega}$ to a box $\Lambda_N:=\big( -L^{1/d}_N/2,L^{1/d}_N/2 \big)^{d}$, $L_N > 0$, with Dirichlet boundary conditions is given by the self-adjoint operator 
\begin{equation}\label{FVH}
H_{N,\omega}=- \Delta+V_{\omega} \ ,\quad \omega \in \Omega\ ,\quad N \in \mathds{N}\ ,
\end{equation}
on $L^2(\Lambda_N)$ with the Sobolev space $H_0^1(\Lambda_N)$ as its domain. Here, $-\Delta$ is the $d$-dimensional (Dirichlet-) Laplacian and $V_{\omega}$ is a random potential \cite[Chapter 1]{pastur1992spectra}.
We assume that $H_{N,\omega}$ is $\mathds P$-almost surely a positive self-adjoint operator with purely discrete spectrum, i.e., with a spectrum that consists only of isolated eigenvalues of finite multiplicities. This is the case, for example, if $V_{\omega}$ is a Poisson random potential on $\mathds R^d$, $d \in \mathds{N}$, with a single-impurity potential that is a bounded, compactly supported and non-negative function \cite[Proposition 3.3]{lenoble2004bose}, \cite[Theorem 5.1]{pastur1992spectra}. These assumptions are also fulfilled if $V_{\omega}$ is, informally, a Poisson random potential on $\mathds R$ with a singular single-impurity potential of the form $\gamma \delta$ where $\gamma > 0$ or $\gamma = \infty$ denotes the interaction strength and $\delta$ the Dirac-$\delta$ distribution \cite[p. 146]{pastur1992spectra}, \cite{LenobleZagrebnovLuttingerSy}. 

We denote the associated sequence of eigenvalues of $H_{N,\omega}$, written in ascending order and each eigenvalue repeated according to its multiplicity, by $(E^{j,\omega}_N)_{j \in \mathds{N}}$ and the associated sequence of eigenfunctions by $(\varphi^{j,\omega}_N  )_{j \in \mathds{N}}$. Note that $E^{j,\omega}_N > 0$ and $\varphi^{j,\omega}_N \in L^2(\Lambda_N)$ for $\mathds P$-almost all $\omega \in \Omega$ and all $j,N \in \mathds{N}$. Moreover, we indicate a random variable by writing an $\omega$ as a subscript or superscript. $\mathds{E}$ refers to the expectation of a random variable with respect to $\mathds{P}$.
\begin{remark} We also allow for Hamiltonians $H_{N,\omega}$ where the potential $V_{N,\omega}$ is distribution-valued as it is the case in the Luttinger--Sy model. In this case, one defines the Hamiltonian $H_{N,\omega}$ rigorously via its associated quadratic form on $H_0^1(\Lambda_N)$. 
\end{remark}
We recall that the standard counting function is given by
\begin{equation}
\mathcal{N}^{\mathrm I,\omega}_{N}(E):=\frac{1}{L_N} \left| \left\{ j \in \mathds{N}: E^{j,\omega}_N < E \right\} \right|\ , \quad  \omega \in \Omega \ , \quad E \in \mathds R \ ,
\end{equation}
where $\left|A\right|$ denotes the number of elements of a set $A \subset \mathds{N}$. The function $\mathcal N_N^{\mathrm{I,\omega}}$ is called the \emph{integrated density of states} and is $\mathds P$-almost surely left-continuous and monotonically increasing. For $\mathds P$-almost all $\omega \in \Omega$, $\mathcal N_N^{\mathrm{I,\omega}}$ consequently defines a unique Borel measure, that is, a measure on the Borel-$\sigma$-algebra on $\mathds R$ such that the measure of any compact subset of $\mathds R$ is finite. Ones denotes this uniquely defined measure by $\mathcal N_N^{\omega}$ and calls it the \emph{density of states}.

Throughout this paper, we employ the standard \emph{thermodynamic limit}: For $N \to \infty$ the volume $L_N$ of the box $\Lambda_N$ increases with the particle number $N$ of the system in such a way that the \emph{particle density} $N / L_N =: \rho > 0$ remains constant for all $N \in \mathds{N}$. \\

\noindent In addition to what was mentioned above, we assume the following about $H_{N,\omega}$ in the rest of the paper:
\begin{assumptions} \label{assumptions} \ 
	
	\begin{enumerate}[(i)]
	    \item \label{assumption 1} For the ground-state energy $E_N^{1,\omega}$ of $H_N^{1,\omega}$, we have $\lim\limits_{N \to \infty} E_N^{1,\omega} = 0$ $\mathds P$-almost surely. 
	    \item \label{assumption 2} There exists a non-random measure $\mathcal N_{\infty}$ such that $\mathds P$-almost surely
		\begin{equation}
		\lim\limits_{N \to \infty} \mathcal N_N^{\omega} = \mathcal N_{\infty}
		\end{equation}
		in the vague sense; see Remark~\ref{DefVagueConvergence}.
		
		\item \label{assumption 3} There exist constants $c_1 > 0$ and $\widetilde{E} > 0$ such that for all $N \in \mathds{N}$ and all $0 < E \leq \widetilde{E}$ one has
		\begin{equation*}
		\mathds{E}\ \mathcal{N}^{\mathrm I,\omega}_{N}(E)\leq c_1\mathcal{N}^{\mathrm I}_{\infty}(E) \ .
		\end{equation*}
		Here, 
		$$\mathcal{N}^{\mathrm I}_{\infty}(E):= \mathcal N_{\infty}((-\infty,E)) =
		\begin{cases}
		\int\limits_{(0,E)} \ud \mathcal{N}_{\infty}(E) \quad & \text{ if } E > 0 \\
		0 \quad & \text{ if } E \le 0
		\end{cases}$$
		for all $E \in \mathds R$.
		\item \label{assumption 4} There exists a constant $0 < \eta_1 < 1$ such that 
		\begin{equation*}
		\lim_{N \rightarrow \infty} N^{1-\eta_1}\mathcal{N}^{\mathrm I}_{\infty}\left(\left(\frac{(1+\eta_1/2)\nu \gamma^{d/2}_{d}}{\ln L_N}\right)^{2/d} \right)=0\ ,
		\end{equation*} 
		where $\gamma_d > 0$ is the lowest Dirichlet eigenvalue of $-\Delta$ on the $d$-dimensional sphere with unit volume.
	\end{enumerate}
\end{assumptions}

We refer to Remark \ref{remark Poisson erfuellt bedinungen} for examples of random potentials for which these assumptions are met.

\begin{remark}\label{DefVagueConvergence} A sequence $(\mathscr M_N)_{N \in \mathds{N}}$ of Radon measures (that is, of measures that are inner regular and locally finite) on the Borel-$\sigma$-algebra on $\mathds R$ is said to converge to a Radon measure $\mathscr M_{\infty}$ in the vague sense if $$\lim_{N \to \infty} \int\limits_{\mathds R} f(E) \, \ud \mathscr M_N(E) = \int\limits_{\mathds R} f(E) \, \ud \mathscr M_{\infty}(E)$$ for all continuous and compactly supported functions $f : \mathds R \to \mathds R$, see, e.g., \cite{bauer2001measure} for more details.
\end{remark}

\begin{remark} \label{assumptions remark}
 Assumption \ref{assumptions}~\eqref{assumption 4} implies in particular that
		\begin{equation*}
		\lim_{\epsilon \rightarrow 0^+}\epsilon^{-1}\mathcal{N}^{\mathrm I}_{\infty}(\epsilon)=0\ ,
		\end{equation*}
		as well as
		\begin{equation*}
		\int\limits_{0}^{\varepsilon} \mathcal{N}^{\mathrm I}_{\infty}(E)E^{-2}\, \ud E < \infty\ 
		\end{equation*}
		for arbitrary $\varepsilon > 0$.
\end{remark}

An important quantity in the context of BEC in non-interacting models is the so-called \textit{critical density} $\rho_{c}(\beta)$ where $\beta:=1/T \in (0,\infty)$ denotes the inverse temperature. With
\begin{equation} \label{definition B(E)}
\mathcal{B}(E):= \left( \mathrm{e}^{\beta E}-1 \right)^{-1} \, \mathds 1_{(0,\infty)}(E) \ ,
\end{equation}
$\mathds 1_{A}(E)$ being the indicator function of a measurable set $A \subset \mathds R$,
$\rho_{c}(\beta)$ is defined as
\begin{equation} \label{definition critical density}
\rho_{c}(\beta):=\int\limits_{\mathds{R}} \mathcal{B}(E)\, \ud \mathcal{N}_{\infty}(E) \ .
\end{equation}
As usual, we study BEC in the grand-canonical ensemble of statistical mechanics. Hence, the number of particles occupying the $j$th eigenstate is $\mathds P$-almost surely given by
\begin{equation} \label{definition occupation numbers grand canonical ensemble}
n^{j,\omega}_N:= \left( \mathrm{e}^{\beta(E_N^{j,\omega}-\mu^{\omega}_N)}-1 \right)^{-1}\ .
\end{equation}
Here, $\mu^{\omega}_N \in (-\infty,E^{1,\omega}_N)$ is the so-called chemical potential which is, for $\mathds P$-almost all $\omega \in \Omega$, uniquely determined by the condition that
\begin{equation}
\frac{1}{L_N}\sum_{j=1}^{\infty}n^{j,\omega}_N= \rho
\end{equation}
holds for all values of $N \in \mathds{N}$. 
We now state a first result which shows that the critical density as defined in \eqref{definition critical density} is indeed finite which consequently implies the existence of generalized BEC. For the proof we refer to \cite[Theorem 4.1]{lenoble2004bose} in combination with \cite[Theorem 2.7]{KPS182}.
\begin{theorem}[Generalized BEC]\label{TheoremGENBEC} Under the above Assumptions~\ref{assumptions} one has $\rho_{c}(\beta) < \infty$. In addition, if and only if the particle density is larger than the critical density, $\rho > \rho_{c}(\beta)$, generalized BEC $\mathds P$-almost surely occurs, i.e., one has
	\begin{equation}
	\mathds{P}\left(\lim_{\epsilon \rightarrow 0^+} \liminf_{N \rightarrow \infty}\frac{1}{N} \sum_{j\in \mathds{N}: E^{j,\omega}_N-E^{1,\omega}_N \leq \epsilon}n^{j,\omega}_N=\frac{\rho-\rho_c(\beta)}{\rho}>0 \right)=1\ .
	\end{equation}
Moreover, whenever $\rho > \rho_{c}(\beta)$ the sequence of chemical potentials $(\mu^{\omega}_N)_{N \in \mathds{N}}$ $\mathds P$-almost surely converges to zero.
\end{theorem}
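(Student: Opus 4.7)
The plan is to manipulate the grand-canonical identity
\[
\rho \;=\; \frac{1}{L_N}\sum_{j=1}^{\infty} n_N^{j,\omega} \;=\; \int_{(0,\infty)} \frac{\ud\mathcal N_N^\omega(E)}{\mathrm{e}^{\beta(E-\mu_N^\omega)}-1},
\]
pass to the thermodynamic limit via the vague convergence $\mathcal N_N^\omega \to \mathcal N_\infty$ from Assumption~\ref{assumptions}~\eqref{assumption 2}, and use the uniform expectation bound in~\eqref{assumption 3} to tame the singularity of the Bose--Einstein weight at $E=0$.

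First I would verify $\rho_c(\beta)<\infty$. Splitting the defining integral at $\widetilde E$, on $(0,\widetilde E]$ the estimate $\mathcal B(E)=O(1/E)$ combined with an integration by parts reduces the problem to the two statements in Remark~\ref{assumptions remark}: the boundary term vanishes because $\epsilon^{-1}\mathcal N_\infty^{\mathrm I}(\epsilon)\to 0$, and the bulk integral $\int_0^{\widetilde E}\mathcal N_\infty^{\mathrm I}(E)/E^2\,\ud E$ is finite. On $[\widetilde E,\infty)$ the exponential decay of $\mathcal B$ beats the at-most-polynomial growth of $\mathcal N_\infty^{\mathrm I}$, which follows from the quadratic-form inequality $H_{N,\omega}\geq -\Delta$, Assumption~\ref{assumptions}~\eqref{assumption 3}, and the Weyl asymptotics of the Dirichlet Laplacian.

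Next I would prove $\mu_N^\omega\to 0$ almost surely whenever $\rho>\rho_c(\beta)$. Because $\mu_N^\omega<E_N^{1,\omega}\to 0$ by Assumption~\ref{assumptions}~\eqref{assumption 1}, one already has $\limsup_N \mu_N^\omega\leq 0$ almost surely. Suppose for contradiction that $\mu_N^\omega\leq -c<0$ along a random subsequence. The integrand in the density identity is then dominated by the continuous function $(\mathrm{e}^{\beta(E+c)}-1)^{-1}$, which is bounded on $[0,\infty)$ and exponentially decaying. On compact subsets of $(0,\infty)$ I would invoke vague convergence; the tail is handled by exponential decay; and the near-zero contribution is controlled uniformly in $N$ by applying Assumption~\ref{assumptions}~\eqref{assumption 3}, then Markov's inequality and a Borel--Cantelli argument to obtain the required almost-sure bound. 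Passing to the limit gives
\[
\rho \;\leq\; \int_{(0,\infty)} \bigl(\mathrm{e}^{\beta(E+c)}-1\bigr)^{-1}\,\ud\mathcal N_\infty(E) \;<\; \rho_c(\beta),
\]
a contradiction.

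To identify the condensate, I would fix $\epsilon>0$ and split
\[
\rho \;=\; \frac{1}{L_N}\!\sum_{E_N^{j,\omega}-E_N^{1,\omega}\leq\epsilon}\!\! n_N^{j,\omega} \;+\; \frac{1}{L_N}\!\sum_{E_N^{j,\omega}-E_N^{1,\omega}>\epsilon}\!\! n_N^{j,\omega}.
\]
Since $\mu_N^\omega, E_N^{1,\omega}\to 0$ almost surely, the second sum converges almost surely to $\int_{(\epsilon,\infty)} \mathcal B(E)\,\ud\mathcal N_\infty(E)$ by the same limiting procedure, and then to $\rho_c(\beta)$ as $\epsilon\to 0^+$; hence the first sum (the generalized-condensate density) converges to $\rho-\rho_c(\beta)$, which after division by $N=\rho L_N$ delivers the claimed fraction. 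The converse---that $\rho\leq\rho_c(\beta)$ excludes a positive condensate---follows from the same tail estimate exhausting the full density. The main obstacle throughout is precisely the $E=0$ singularity of $\mathcal B$: plain vague convergence probes only compactly supported test functions and would allow mass to escape to the origin undetected. Assumption~\ref{assumptions}~\eqref{assumption 3}, combined with Remark~\ref{assumptions remark}, is the device that forces this contribution to be negligible at the level of expectations, after which Markov/Borel--Cantelli transfer the conclusion to the almost-sure statement required by the theorem.
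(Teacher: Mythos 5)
Your overall architecture --- the grand-canonical density identity, vague convergence away from the origin, Assumption~\ref{assumptions}~\eqref{assumption 3} together with Remark~\ref{assumptions remark} to tame the singularity of $\mathcal B$ at $E=0$, and the sandwich in $\epsilon$ to identify the condensate fraction --- is the standard route, and it is the one taken by the references to which the paper defers for this theorem (\cite[Theorem~4.1]{lenoble2004bose}, \cite[Theorem~2.7]{KPS182}) and by the paper's supporting Lemma~\ref{Lemma 2 Gen BEC}. However, one step would fail as written: you propose to control the near-zero contribution almost surely ``by applying Assumption~\eqref{assumption 3}, then Markov's inequality and a Borel--Cantelli argument.'' Assumption~\ref{assumptions}~\eqref{assumption 3} bounds $\mathds E\,\mathcal N_N^{\mathrm I,\omega}(E)$ only \emph{uniformly} in $N$, so Markov yields $\mathds P\big(\mathcal N_N^{\mathrm I,\omega}(\epsilon)>\delta\big)\le c_1\mathcal N_\infty^{\mathrm I}(\epsilon)/\delta$ for every $N$ --- a bound that does not decay in $N$, hence is not summable, and Borel--Cantelli gives nothing. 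The repair is that no probabilistic input is needed there: in the $\mu_N^\omega\to0$ argument the integrand satisfies $\mathcal B(E+c)\le\mathcal B(c)$, so $\int_{(0,\epsilon]}\mathcal B(E+c)\,\ud\mathcal N_N^\omega(E)\le\mathcal B(c)\,\mathcal N_N^{\mathrm I,\omega}(2\epsilon)$, and testing the vague convergence of Assumption~\eqref{assumption 2} against a continuous function dominating $\mathds 1_{[0,2\epsilon]}$ and supported in $(-1,3\epsilon)$ gives $\limsup_{N\to\infty}\mathcal N_N^{\mathrm I,\omega}(2\epsilon)\le\mathcal N_\infty^{\mathrm I}(3\epsilon)$ $\mathds P$-almost surely, which vanishes as $\epsilon\to0^+$ by Remark~\ref{assumptions remark}; this is precisely the mechanism behind Lemma~\ref{Lemma 2 Gen BEC}. (Assumption~\eqref{assumption 3} is genuinely needed only where expectations are taken, as in Proposition~\ref{Proposition 1}.)

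Two smaller points. First, your tail bound for $\rho_c(\beta)$ invokes $H_{N,\omega}\ge-\Delta$ and Weyl asymptotics; this presumes $V_\omega\ge0$, which holds in the examples but is not among the stated Assumptions. Within the stated framework the tail is finite more directly: combining $\rho\ge\int_{(\epsilon,\infty)}\mathcal B(E-\mu_N^\omega)\,\ud\mathcal N_N^\omega(E)$ with the lower bound \eqref{Lemma 2 Gen BEC HG 2} yields $\int_{(\epsilon,\infty)}\mathcal B(E)\,\ud\mathcal N_\infty(E)\le\rho+\tfrac{4}{\beta\epsilon}\mathcal N_\infty^{\mathrm I}(2\epsilon)<\infty$ for fixed $\epsilon$. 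Second, the ``only if'' direction is only gestured at: for $\rho\le\rho_c(\beta)$ one must show that $\mu_N^\omega$ converges to the (possibly strictly negative) solution of $\rho=\int\mathcal B(E-\mu)\,\ud\mathcal N_\infty(E)$ and that the non-condensate integral then exhausts the full density; this requires an argument of its own rather than ``the same tail estimate.''
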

In order to study the existence of macroscopic occupation of the ground state and to determine the type of condensation, we introduce the event
\begin{equation}\label{Definition Omega4}
\Omega_N^{c_2,c_3}:=\left\{\omega \in \Omega: E^{c_2+1,\omega}_N-E^{1,\omega}_N \geq c_3 N^{-1+\eta_1}\ \text{and} \ E^{1,\omega}_N \leq \left[\left(1+\frac{\eta_1}{4}\right)\frac{\nu \gamma^{d/2}_d}{\ln L_N} \right]^{2/d} \right\}
\end{equation}
for some $c_2 \in \mathds{N}$ and $c_3 > 0$ and with the constant $\eta_1$ from Assumption~\ref{assumptions}~\eqref{assumption 4}. At this point we remark two facts: Firstly, the event $\Omega_N^{c_2, c_3}$ is the main ingredient in the ``gap condition'' as formulated in Theorem~\ref{MainResult}. Secondly, the upper bound for the ground-state energy $E_N^{1,\omega}$ in \eqref{Definition Omega4} is $\mathds P$-almost surely fulfilled if $V_{\omega}$ is a Poisson random potential on $\mathds R^d$ with a single-impurity potential that is a bounded, non-negative, compactly supported function, see for example \cite[Theorem 4.6]{sznitman1998brownian}.
\begin{prop}\label{Proposition 1} Under Assumptions~\ref{assumptions}, for $\rho > \rho_{c}(\beta)$, one has
	\begin{equation}
	\liminf_{N \rightarrow \infty} \mathds{E}\, \int\limits_{(0,E^{c_2,\omega}_N]}\mathcal{B}(E-\mu^{\omega}_N)\, \ud \mathcal{N}^{\omega}_N(E) \geq \rho - \rho_{c}(\beta)-\rho\left(1-\liminf_{N\rightarrow \infty} \mathds{P}(\Omega_N^{c_2, c_3})  \right)
	\end{equation}
\end{prop}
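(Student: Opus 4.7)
The plan is to begin by reformulating the claim in an equivalent, more convenient form. The defining equation $\rho = \int_{(0,\infty)} \mathcal{B}(E - \mu_N^{\omega}) \, \ud \mathcal{N}_N^{\omega}(E)$ of the chemical potential allows one to write the low-energy integral as $\rho$ minus the tail $T_N := \int_{(E_N^{c_2,\omega}, \infty)} \mathcal{B}(E - \mu_N^{\omega}) \, \ud \mathcal{N}_N^{\omega}(E)$, so the inequality is equivalent to $\limsup_N \mathds{E}[T_N] \leq \rho_c(\beta) + \rho(1 - \liminf_N \mathds{P}(\Omega_N^{c_2,c_3}))$. Splitting by the indicator of $\Omega_N^{c_2, c_3}$ and using the trivial bound $T_N \leq \rho$ on the complement, it suffices to prove
\begin{equation*}
\limsup_{N \to \infty} \mathds{E}[T_N \mathds{1}_{\Omega_N^{c_2, c_3}}] \leq \rho_c(\beta).
\end{equation*}

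On $\Omega_N^{c_2, c_3}$ I would exploit both the energy gap and the smallness of $E_N^{1, \omega}$. Since $\mu_N^{\omega} < E_N^{1, \omega}$ and $\mathcal{B}$ is strictly decreasing on $(0,\infty)$, one has $\mathcal{B}(E - \mu_N^{\omega}) \leq \mathcal{B}(E - E_N^{1, \omega})$; after the substitution $E' = E - E_N^{1, \omega}$ and using that on $\Omega_N^{c_2, c_3}$ the atoms $E_N^{j,\omega} - E_N^{1,\omega}$ with $j \geq c_2 + 1$ all lie in $[c_3 N^{-1+\eta_1}, \infty)$, one arrives at
\begin{equation*}
T_N \mathds{1}_{\Omega_N^{c_2, c_3}} \leq \mathds{1}_{\Omega_N^{c_2, c_3}} \int_{[c_3 N^{-1+\eta_1}, \infty)} \mathcal{B}(E') \, \ud \widetilde{\mathcal{N}}_N^{\omega}(E'),
\end{equation*}
where $\widetilde{\mathcal{N}}_N^{\omega}(\,\cdot\,) := \mathcal{N}_N^{\omega}(\,\cdot + E_N^{1, \omega})$ is the integrated density of states of the shifted Hamiltonian $H_{N,\omega} - E_N^{1,\omega}$.

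The heart of the argument is to show this shifted tail integral converges, $\mathds{P}$-almost surely, to $\rho_c(\beta) = \int_{(0,\infty)} \mathcal{B}(E)\, \ud \mathcal{N}_\infty(E)$. By Assumption~\ref{assumptions}~\eqref{assumption 1} one has $E_N^{1,\omega} \to 0$ $\mathds{P}$-a.s., and together with Assumption~\ref{assumptions}~\eqref{assumption 2} the measures $\widetilde{\mathcal{N}}_N^{\omega}$ converge vaguely to $\mathcal{N}_\infty$ almost surely. Splitting the domain into $[c_3 N^{-1+\eta_1}, \delta] \cup (\delta, A] \cup (A, \infty)$ for fixed $0 < \delta < A$, vague convergence disposes of the bounded middle piece. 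For the region near $0$, integration by parts expresses the integral through $\widetilde{\mathcal{N}}_N^{I,\omega}$; the bound $\widetilde{\mathcal{N}}_N^{I, \omega}(E) \leq \mathcal{N}_N^{I, \omega}(E + E_N^{1,\omega})$ combined with Assumption~\ref{assumptions}~\eqref{assumption 3} controls the expectation by a constant times $\int_{(0, \delta]} \mathcal{B} \, \ud \mathcal{N}_\infty$, which vanishes as $\delta \to 0^+$ by Remark~\ref{assumptions remark}. The boundary term $\mathcal{B}(c_3 N^{-1+\eta_1}) \widetilde{\mathcal{N}}_N^{I, \omega}(c_3 N^{-1+\eta_1})$ is on $\Omega_N^{c_2, c_3}$ bounded by $c_2 \rho / (\beta c_3 N^{\eta_1}) \to 0$, using precisely the exponent $\eta_1$ in the gap. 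The tail on $(A, \infty)$ is handled via the exponential decay of $\mathcal{B}$ together with a Weyl-type a priori bound on the shifted IDS.

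Finally, since $T_N \leq \rho$ deterministically, reverse Fatou with the dominating constant $\rho$ transfers the a.s.\ convergence to expectations, yielding $\limsup_N \mathds{E}[T_N \mathds{1}_{\Omega_N^{c_2, c_3}}] \leq \mathds{E}[\limsup_N T_N \mathds{1}_{\Omega_N^{c_2, c_3}}] \leq \rho_c(\beta)$ and completing the proof. The main technical obstacle is the careful analysis of the singular region $[c_3 N^{-1+\eta_1}, \delta]$: the specific exponent $\eta_1$ in the gap must be combined with Assumption~\ref{assumptions}~\eqref{assumption 3} and Remark~\ref{assumptions remark} so that the constant $c_1$ appearing in the former multiplies a quantity that itself tends to $0$ as $\delta \to 0^+$, and only then does one recover the sharp bound $\rho_c(\beta)$ rather than $c_1 \rho_c(\beta)$.
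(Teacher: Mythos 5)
Your overall architecture is essentially the paper's, just written in the shifted variable: express the low-energy integral as $\rho$ minus a tail, discard $\Omega\setminus\Omega_N^{c_2,c_3}$ at the cost of $\rho\,(1-\liminf_N\mathds P(\Omega_N^{c_2,c_3}))$, use the gap to bound $\mathcal B$ by $(\beta c_3)^{-1}N^{1-\eta_1}$ just above the ground state, and finish with integration by parts and reverse Fatou. However, there is a genuine gap in your treatment of the region $[c_3N^{-1+\eta_1},\delta]$: you never invoke Assumption~\ref{assumptions}~\eqref{assumption 4}, and the argument cannot close without it. Assumption~\ref{assumptions}~\eqref{assumption 3} is stated for deterministic energies, so to use it on $\widetilde{\mathcal N}_N^{\mathrm I,\omega}(E)=\mathcal N_N^{\mathrm I,\omega}(E+E_N^{1,\omega})$ you must first replace the random shift by its deterministic bound $\epsilon_N:=[(1+\eta_1/4)\nu\gamma_d^{d/2}/\ln L_N]^{2/d}$ valid on $\Omega_N^{c_2,c_3}$. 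The integration-by-parts bulk term then becomes (up to constants) $\int_{c_3N^{-1+\eta_1}}^{\delta}E^{-2}\,\mathcal N_\infty^{\mathrm I}(E+\epsilon_N)\,\ud E$, and its contribution from $E\lesssim(\ln L_N)^{-2/d}$ is of order $c_3^{-1}N^{1-\eta_1}\,\mathcal N_\infty^{\mathrm I}\big(\mathrm{const}\cdot(\ln L_N)^{-2/d}\big)$. Remark~\ref{assumptions remark} says nothing about this quantity; it vanishes only by Assumption~\ref{assumptions}~\eqref{assumption 4}, and only if the constant inside $\mathcal N_\infty^{\mathrm I}$ is at most $(1+\eta_1/2)\nu\gamma_d^{d/2}$ --- which is precisely why the paper inserts the additional splitting energy $[(1+\eta_1/2)\nu\gamma_d^{d/2}/\ln L_N]^{2/d}$, distinguishes whether $E_N^{c_2+1,\omega}$ lies below or above it (the terms $A_N^{(1)}$ versus $A_N^{(2)},A_N^{(3)}$), and why the two constants $1+\eta_1/4$ and $1+\eta_1/2$ appear in \eqref{Definition Omega4} and Assumption~\ref{assumptions}~\eqref{assumption 4} respectively. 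Your boundary term only accounts for the at most $c_2$ states below the gap; the dangerous states are the $L_N\,\mathcal N_N^{\mathrm I,\omega}\big([(1+\eta_1/2)\nu\gamma_d^{d/2}/\ln L_N]^{2/d}\big)$ eigenstates lying just above the gap but still at energy of order $(\ln L_N)^{-2/d}$, each of which may carry up to $(\beta c_3)^{-1}N^{1-\eta_1}$ particles; only Assumptions~\ref{assumptions}~\eqref{assumption 3} and \eqref{assumption 4} together show that there are $o(N^{\eta_1})$ of them in expectation, hence total occupation $o(N)$.

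Two smaller points. Your tail over $(A,\infty)$ rests on an unproven ``Weyl-type'' a priori bound on the shifted integrated density of states, whereas the paper disposes of the entire region $(\epsilon,\infty)$ in one stroke via Lemma~\ref{Lemma 2 Gen BEC} together with the deterministic bound by $\rho$ and reverse Fatou. And the vague convergence of the shifted measures $\widetilde{\mathcal N}_N^{\omega}$ to $\mathcal N_\infty$, while plausible from Assumptions~\ref{assumptions}~\eqref{assumption 1} and \eqref{assumption 2}, needs a short argument and some care at the endpoints $\delta$ and $A$ if $\mathcal N_\infty$ charges them. These are repairable; the missing use of Assumption~\ref{assumptions}~\eqref{assumption 4} with correctly matched constants is the essential omission.
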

\begin{proof}
For $\epsilon > 0$ and $\eta_1$ from \eqref{Definition Omega4} we define the sets
	\begin{align*}
	\widehat{\Omega}_N^{(1),\epsilon} & := \left\{ \omega \in \Omega : E_N^{c_2,\omega} \ge \epsilon \right\} \ ,\\
	\widehat{\Omega}_N^{(2),\epsilon} & := \left\{\omega \in \Omega : E_N^{c_2,\omega} < \epsilon \right\} \ , \\
	\widehat{\Omega}_N^{c_2, c_3} & := \widehat{\Omega}_N^{(2),\epsilon} \cap \Omega_N^{c_2, c_3} \cap \left\{ \omega \in \Omega : E_N^{c_2 + 1,\omega} \ge \left[ \left( 1 + \dfrac{\eta_1}{2} \right) \dfrac{\nu \gamma_d^{d/2}}{\ln(L_N)} \right]^{2/d} \right\} \ , \\
	\shortintertext{and}
	\widetilde{\Omega}_N^{c_2, c_3} & := \widehat{\Omega}_N^{(2),\epsilon} \cap \Omega_N^{c_2, c_3} \cap  \left\{ \omega \in \Omega : E_N^{c_2 + 1,\omega} < \left[ \left( 1 + \dfrac{\eta_1}{2} \right) \dfrac{\nu \gamma_d^{d/2}}{\ln(L_N)} \right]^{2/d} \right\}\ .
	\end{align*}
	For all $\epsilon > 0$ and all $N \in \mathds{N}$ we conclude
	\begin{align*}
	\rho & = \int\limits_{(0,\infty)} \mathcal{B}(E - \mu_N^{\omega}) \, \ud \mathcal N_N^{\omega}(E)\\
	& = \int\limits_{(0,\epsilon]} \mathcal{B}(E - \mu_N^{\omega}) \, \ud \mathcal N_N^{\omega}(E) + \int\limits_{(\epsilon,\infty)} \mathcal{B}(E - \mu_N^{\omega}) \, \ud \mathcal N_N^{\omega}(E)
	\end{align*}
	and thus
	\begin{align*}
	\rho & \le \int\limits_{\widehat \Omega_N^{(1),\epsilon}}  \int\limits_{\left( 0,E_N^{c_2,\omega} \right]} \mathcal{B}(E - \mu_N^{\omega}) \, \ud \mathcal N_N^{\omega}(E)\, \ud \mathds P(\omega) \\
	& \qquad \qquad + \, \int\limits_{\widehat \Omega_N^{(2),\epsilon}} \int\limits_{\left( 0,E_N^{c_2,\omega} \right]} \mathcal{B}(E - \mu_N^{\omega}) \, \ud \mathcal N_N^{\omega}(E)\, \ud \mathds P(\omega) \\
	& \qquad \qquad + \, \int\limits_{\widehat \Omega_N^{(2),\epsilon}}  \int\limits_{(E_N^{c_2,\omega},\epsilon]} \mathcal{B}(E - \mu_N^{\omega}) \, \ud \mathcal N_N^{\omega}(E) \, \ud \mathds P(\omega) \\
	& \qquad \qquad + \E  \int\limits_{(\epsilon,\infty)} \mathcal{B}(E - \mu_N^{\omega}) \, \ud \mathcal N_N^{\omega}(E)\ .
	\end{align*}
	Hence, we obtain
	\begin{align}
	& \E \int\limits_{\left( 0,E_N^{c_2,\omega} \right]} \mathcal{B}(E - \mu_N^{\omega}) \, \ud \mathcal N_N^{\omega}(E)\\
	& \quad \ge \rho - \int\limits_{\widehat{\Omega}_N^{(2),\epsilon} \cap \, \Omega_N^{c_2, c_3}}  \int\limits_{(E_N^{c_2,\omega},\epsilon]} \mathcal{B}(E - \mu_N^{\omega})\, \ud \mathcal N_N^{\omega}(E)\, \ud \mathds P( \omega) \label{proof BEC type ns LSM 1}\\
	& \qquad \qquad - \int\limits_{\widehat{\Omega}_N^{(2),\epsilon} \cap \, \Omega \backslash \Omega_N^{c_2, c_3}} \int\limits_{(E_N^{c_2,\omega},\epsilon]} \mathcal{B}(E - \mu_N^{\omega}) \, \ud \mathcal N_N^{\omega}(E)\, \ud \mathds P( \omega) \label{proof BEC type ns LSM 3} \\
	& \qquad \qquad  - \E \int\limits_{(\epsilon,\infty)} \mathcal{B}(E - \mu_N^{\omega}) \, \ud \mathcal N_N^{\omega}(E)\label{proof BEC type ns LSM 4} \\
	& \quad =: \rho - A_N(\epsilon) - B_N(\epsilon) - C_N(\epsilon)
	\end{align}
	for all $\epsilon > 0$ and all $N \in \mathds{N}$. In the following, we will discuss the limit $\lim_{\epsilon \to 0+} \limsup_{N \to \infty}$ of this equation.
	
	Firstly, we state an upper bound for the term in line \eqref{proof BEC type ns LSM 4} in the considered limit. Let $\epsilon > 0$ be arbitrarily given. For $\mathds P$-almost all $\omega \in \Omega$ and all $N \in \mathds{N}$, we have the bound $\int_{(\epsilon,\infty)} \mathcal{B}(E - \mu_N^{\omega}) \, \mathrm{d} \mathcal N_N^{\omega} (E) \leq \rho$. 
	Employing Lemma~\ref{Lemma 2 Gen BEC}, we conclude that
	\begin{align*}
	\limsup\limits_{N \to \infty} \int\limits_{(\epsilon,\infty)} \mathcal{B}(E - \mu_N^{\omega}) \, \mathrm{d} \mathcal N_N^{\omega} (E) \le \int\limits_{(\epsilon,\infty)} \mathcal{B}(E) \, \mathrm{d} \mathcal N_{\infty} ( E) + \dfrac{2}{\beta \epsilon} \mathcal N_{\infty}^{\mathrm{I}}(\epsilon)
	\end{align*}
	$\mathds P$-almost surely. By using the (reverse) Fatou Lemma, we obtain
	\begin{align*}
	\limsup\limits_{N \to \infty} \mathds E \int\limits_{(\epsilon,\infty)} \mathcal{B}(E - \mu_N^{\omega}) \, \mathrm{d} \mathcal N_N^{\omega} (E)\le \int\limits_{(\epsilon,\infty)} \mathcal{B}(E) \, \mathrm{d} \mathcal N_{\infty} ( E) + \dfrac{2}{\beta \epsilon} \mathcal N_{\infty}^{\mathrm{I}}(\epsilon) \ .
	\end{align*}
	Hence,
	\begin{align*}
	\lim\limits_{\epsilon \to 0+} \limsup\limits_{N \to \infty} C_N(\epsilon) 
	& \le \lim\limits_{\epsilon \to 0+} \int\limits_{(\epsilon,\infty)} \mathcal{B}(E) \, \mathrm{d} \mathcal N_{\infty} ( E) = \int\limits_{(0,\infty)} \mathcal{B}(E) \, \mathrm{d} \mathcal N_{\infty} ( E) = \rho_c(\beta) \ ,
	\end{align*}
	where we used definitions \eqref{definition B(E)} and \eqref{definition critical density} in the last step. 
	
	Next, we are concerned with the integral in line \eqref{proof BEC type ns LSM 1}, and we are going to show that $\lim_{\epsilon \to 0+} \lim_{N \to \infty} A_N(\epsilon) = 0$. Let $\epsilon > 0$ be arbitrarily given. For convenience, we define
	\begin{align*}
	A_N^{(1)}(\epsilon) & := \int\limits_{\widetilde \Omega_N^{c_2, c_3}}  \int\limits_{\left( E_N^{c_2,\omega},\left[ (1+ \eta_1/2) \nu \gamma_d^{d/2} / \ln(L_N) \right]^{2/d} \right]} \mathcal{B}(E - \mu_N^{\omega}) \, \mathrm{d} \mathcal N_N^{\omega} (E) \, \ud \mathds P( \omega) \ ,\\
	A_N^{(2)}(\epsilon) & :=  \int\limits_{\widetilde \Omega_N^{c_2, c_3}} \int\limits_{\left( \left[ (1+ \eta_1/2) \nu \gamma_d^{d/2} / \ln(L_N) \right]^{2/d},\epsilon \right]}\mathcal{B}(E - \mu_N^{\omega}) \, \mathrm{d} \mathcal N_N^{\omega} (E) \, \ud \mathds P( \omega)
	\shortintertext{for all $N \in \mathds N$ large enough such that $[ (1+ \eta_1/2) \nu \gamma_d^{d/2} / \ln(L_N) ]^{2/d} < \epsilon$, and}
	A_N^{(3)}(\epsilon) & := \int\limits_{\widehat \Omega_N^{c_2, c_3}}  \int\limits_{(E_N^{c_2,\omega},\epsilon]} \mathcal B(E - \mu_N^{\omega}) \, \ud \mathcal N_N^{\omega}(E) \mathrm{d} \mathds P( \omega) \ .
	\end{align*}
Now, since $\mathcal{B}(E - \mu_N^{\omega}) \le [ \beta ( E - E_N^{1,\omega} ) ]^{-1} \le (\beta c_3)^{-1} N^{1-\eta_1}$
	for all but finitely many $N \in \mathds{N}$, all $\omega \in \Omega_N^{c_2, c_3}$, and all $E \ge E_N^{c_2 + 1,\omega}$, we obtain
	\begin{align*}
	\lim\limits_{N \to \infty} A_N^{(1)}(\epsilon) & = \lim\limits_{N \to \infty} \int\limits_{\widetilde \Omega_N^{c_2, c_3}} \int\limits_{\left( E_N^{c_2,\omega},\left[ (1+ \eta_1/2) \nu \gamma_d^{d/2} / \ln(L_N) \right]^{2/d} \right]} \mathcal{B}(E - \mu_N^{\omega}) \, \mathrm{d} \mathcal N_N^{\omega} (E) \, \ud \mathds P( \omega) \\
	& \le (\beta c_3)^{-1} \lim\limits_{N \to \infty} N^{1-\eta_1} \int\limits_{\widetilde \Omega_N^{c_2, c_3}} \int\limits_{\left( E_N^{c_2,\omega},\left[ (1+ \eta_1/2) \nu \gamma_d^{d/2} / \ln(L_N) \right]^{2/d} \right]} \, \mathrm{d} \mathcal N_N^{\omega} (E) \, \ud \mathds P( \omega) \\
	& \le (\beta c_3)^{-1} \lim\limits_{N \to \infty} N^{1-\eta_1} \int\limits_{\widetilde \Omega_N^{c_2, c_3}} \mathcal N_N^{\mathrm{I},\omega} \left( \left( \dfrac{(1 + \eta_1/2) \nu \gamma_d^{d/2}}{ \ln (L_N)}\right)^{2/d} \right) \ud \mathds P( \omega) \ .
	\end{align*}
	Using the fact that $\mathcal N_N^{\mathrm{I},\omega}(E) \ge 0$ for $\mathds P$-almost all $\omega \in \Omega$, for all $E \in \mathds R$, and for all $N \in \mathds N$ as well as employing our conditions (\ref{assumption 3}) and (\ref{assumption 4}) from Assumptions~\ref{assumptions}, we get 
	\begin{align*}
	\lim\limits_{N \to \infty} A_N^{(1)}(\epsilon) & \le (\beta c_3)^{-1} \lim\limits_{N \to \infty} N^{1-\eta_1} \E \ \mathcal N_N^{\mathrm{I},\omega} \left( \left( \dfrac{(1 + \eta_1/2) \nu \gamma_d^{d/2}}{ \ln (L_N)}\right)^{2/d} \right)  \\
	& \le (\beta c_3)^{-1} c_1 \lim\limits_{N \to \infty} N^{1-\eta_1} \mathcal N_{\infty}^{\mathrm{I}} \left( \left( \dfrac{(1 + \eta_1/2) \nu \gamma_d^{d/2}}{ \ln (L_N)}\right)^{2/d} \right) \\
	& = 0 \ .
	\end{align*}
	Secondly, for arbitrary $N \in \mathds{N}$ we conclude that for all $E \ge [(1+ \eta_1/2) \nu \gamma_d^{d/2} / \ln(L_N)]^{2/d}$ and for all $\omega \in \widetilde \Omega_N^{c_2, c_3}$,
	$$E \ge \left( \left( 1+ \dfrac{\eta_1}{2} \right) \dfrac{\nu \gamma_d^{d/2}}{\ln(L_N)} \right)^{2/d} \ge \left( \dfrac{1+\eta_1/2}{1 + \eta_1/4} \right)^{2/d} E_N^{1,\omega}$$
	and, consequently,
	$$E - \mu_N^{\omega} \ge E - E_N^{1,\omega} \ge \left[ 1 - \left( \dfrac{1 + \eta_1/4}{1 + \eta_1/2} \right)^{2/d} \right] E = c_4 E \ ,$$
	where we have set $c_4: = 1 - ( [1 + \eta_1/4]/[1 + \eta_1/2] )^{2/d} > 0$. Hence,
	\begin{align*}
	& \lim\limits_{\epsilon \to 0+} \limsup\limits_{N \to \infty} A_N^{(2)}(\epsilon)\\
	& \quad = \lim\limits_{\epsilon \to 0+} \limsup\limits_{N \to \infty} \int\limits_{\widetilde \Omega_N^{c_2, c_3}} \int\limits_{\left( \left[ (1+ \eta_1/2) \nu \gamma_d^{d/2} / \ln(L_N) \right]^{2/d},\epsilon \right]}\mathcal{B}(E - \mu_N^{\omega}) \, \mathrm{d} \mathcal N_N^{\omega} (E) \, \ud \mathds P( \omega) \\
	& \quad \le\lim\limits_{\epsilon \to 0+} \limsup\limits_{N \to \infty} \int\limits_{\widetilde \Omega_N^{c_2, c_3}}  \int\limits_{\left( \left[ (1+ \eta_1/2) \nu \gamma_d^{d/2} / \ln(L_N) \right]^{2/d},\epsilon \right]} \mathcal B \left( c_4 E \right)\, \mathrm{d} \mathcal N_N^{\omega} (E) \, \ud \mathds P( \omega) \\
	& \quad \le (\beta c_4)^{-1} \lim\limits_{\epsilon \to 0+} \limsup\limits_{N \to \infty} \int\limits_{\widetilde \Omega_N^{c_2, c_3}} \int\limits_{\left( \left[ (1+ \eta_1/2) \nu \gamma_d^{d/2} / \ln(L_N) \right]^{2/d},\epsilon \right]} E^{-1} \, \mathrm{d} \mathcal N_N^{\omega} (E) \, \ud \mathds P( \omega) \ .
	\end{align*}
	Employing an integration by parts (for Lebesgue--Stieltjes integrals, see, e.g., \cite[Theorem 21.67]{hewitt1965real}), we have
	\begin{align*}
	& \lim\limits_{\epsilon \to 0+} \limsup\limits_{N \to \infty} A_N^{(2)}(\epsilon)\\
	& \quad \le (\beta c_4)^{-1} \lim\limits_{\epsilon \to 0+} \limsup\limits_{N \to \infty} \int\limits_{\widetilde \Omega_N^{c_2, c_3}} \Bigg[ \epsilon^{-1} \mathcal N_N^{\mathrm{I},\omega} (2 \epsilon) \\
	& \qquad \qquad \qquad \qquad \qquad \qquad \qquad + \,
	\left. \int\limits_{\left[ (1+ \eta_1/2) \nu \gamma_d^{d/2} / \ln(L_N) \right]^{2/d}}^{\epsilon} \mathcal N_N^{\mathrm{I},\omega} (E) E^{-2} \, \mathrm{d} E \right] \, \ud \mathds P( \omega) \ .
	\end{align*}
	We then use again the fact that $\mathcal N_N^{\mathrm{I},\omega}(E) \ge 0$ for $\mathds P$-almost all $\omega \in \Omega$, all $E \in \mathds R$, and all $N \in \mathds{N}$ and employ Assumption~\ref{assumptions}~\eqref{assumption 3} and Remark~\ref{assumptions remark} to obtain
	\begin{align*}
	& \lim\limits_{\epsilon \to 0+} \limsup\limits_{N \to \infty} A_N^{(2)}(\epsilon)\\
	& \quad \le (\beta c_4)^{-1} \lim\limits_{\epsilon \to 0+} \limsup\limits_{N \to \infty} \Bigg[ \epsilon^{-1} \int\limits_{\widetilde \Omega_N^{c_2, c_3}} \mathcal N_N^{\mathrm{I},\omega} ( 2 \epsilon) \, \ud \mathds P( \omega) \\
	& \qquad \qquad \qquad \qquad \qquad \qquad + \, \left. \int\limits_{\left[ (1+ \eta_1/2) \nu \gamma_d^{d/2} / \ln(L_N) \right]^{2/d}}^{\epsilon}E^{-2} \int\limits_{\widetilde \Omega_N^{c_2, c_3}} \, \mathcal N_N^{\mathrm{I},\omega} (E) \, \ud \mathds P( \omega) \,  \mathrm{d} E \right] \\
	& \quad \le (\beta c_4)^{-1} \lim\limits_{\epsilon \to 0+} \limsup\limits_{N \to \infty} \Bigg[ \epsilon^{-1} \, \E \ \mathcal N_N^{\mathrm{I},\omega} ( 2 \epsilon) \\
	& \qquad \qquad \qquad \qquad \qquad \qquad + \, \left. \int\limits_{\left[ (1+ \eta_1/2) \nu \gamma_d^{d/2} / \ln(L_N) \right]^{2/d}}^{\epsilon}E^{-2} \, \E \ \mathcal N_N^{\mathrm{I},\omega} (E) \, \mathrm{d} E \right] \\
	& \quad \le \dfrac{c_1}{\beta c_4} \lim\limits_{\epsilon \to 0+} \left[ \epsilon^{-1} \mathcal N_{\infty}^{\mathrm{I}}(2 \epsilon)+ \int\limits_{0}^{\epsilon} \mathcal N_{\infty}^{\mathrm{I}}(E) E^{-2} \, \mathrm{d} E \right] \\
	& \quad = 0 \ .
	\end{align*}

	In a third step we show that
	$$\lim\limits_{\epsilon \to 0+} \limsup\limits_{N \to \infty} A_N^{(3)}(\epsilon) = 0$$
	proceeding similarly as before. Note that for all $N \in \mathds{N}$, all $\omega \in \widehat \Omega_N^{c_2, c_3}$, and all $E > E_N^{c_2 + 1,\omega}$ we now have
	$$E \ge E_N^{c_2 + 1,\omega} \ge \left( \left( 1+ \dfrac{\eta_1}{2} \right) \dfrac{\nu \gamma_d^{d/2}}{\ln(L_N)} \right)^{2/d} \ge \left( \dfrac{1+\eta_1/2}{1 + \eta_1/4} \right)^{2/d} E_N^{1,\omega}$$
	and hence
	$$E - \mu_N^{\omega} \ge E - E_N^{1,\omega} \ge \left[1 - \left( \dfrac{1 + \eta_1/4}{1 + \eta_1/2} \right)^{2/d} \right] E = c_4 E \ .$$
	Consequently, 
	\begin{align*}
	& \lim\limits_{\epsilon \to 0+} \limsup\limits_{N \to \infty} A_N^{(3)}(\epsilon) 
	\le \dfrac{c_1}{\beta c_4} \lim\limits_{\epsilon \to 0+} \left[ \epsilon^{-1} \mathcal N_{\infty}^{\mathrm{I}}(2 \epsilon)+ \int\limits_{0}^{\epsilon} \mathcal N_{\infty}^{\mathrm{I}}(E) E^{-2} \, \mathrm{d} E \right] 
	= 0 \ .
	\end{align*}
	Summing up the last steps we have shown that $\lim_{\epsilon \to 0+} \limsup_{N \to \infty} A_N(\epsilon) = 0$.
		
	In a final step we bound the term in line~\eqref{proof BEC type ns LSM 3}: For all $\epsilon > 0$,
	\begin{align*}
	\limsup\limits_{N \to \infty} B_N(\epsilon) & \le \limsup\limits_{N \to \infty} \int\limits_{\widehat \Omega_N^{(2),\epsilon} \cap \, \Omega \backslash \Omega_N^{c_2, c_3}} \int\limits_{(0,\epsilon]} \mathcal{B}(E - \mu_N^{\omega})\, \mathrm{d} \mathcal N_N^{\omega} (E)\, \ud \mathds P( \omega) \\
	& \le \rho \limsup\limits_{N \to \infty} \mathds P\left( \Omega \backslash \Omega_N^{c_2, c_3} \right) \ .
	\end{align*}
	Thus,
	\begin{align*}
	& \liminf\limits_{N \to \infty} \E \int\limits_{\left( 0,E_N^{c_2,\omega} \right]} \mathcal{B}(E - \mu_N^{\omega}) \, \ud \mathcal N_N^{\omega}(E)  \ge \rho - \rho_c - \rho \limsup\limits_{N \to \infty} \mathds P\left( \Omega \backslash \Omega_N^{c_2, c_3} \right) \ .
	\end{align*}
\end{proof}
We can now state our main result which allows to conclude that, for particle densities larger than the critical one and given a certain ``gap condition'' is satisfied, one indeed has BEC of type I in probability.
\begin{theorem}[Type-I BEC in probability]\label{MainResult} Assume that Assumptions~\ref{assumptions} are fulfilled and that $\rho > \rho_{c}(\beta)$. If $H_{N,\omega}$ is such that the ``gap condition'' for some $c_2 \in \mathds{N}$ and $c_3 >0$ is fulfilled, i.e.,  
	\begin{equation} \label{gap condition}
	\lim_{N \rightarrow \infty}\mathds{P}\left(\Omega_N^{c_2, c_3}\right)=1 \ ,
	\end{equation}
then, for all $\eta > 0$,  
	\begin{equation} \label{(2.18)}
	\lim_{N \rightarrow \infty}\mathds{P}\left(\left|\frac{1}{N}\sum_{j=1}^{c_2}n^{j,\omega}_N-\frac{\rho-\rho_c(\beta)}{\rho}  \right| <\eta \right)  =1\ .
	\end{equation}
	In addition, for all $\eta > 0$ and all $j \geq c_2+1$ one has
	\begin{equation} 
	\lim_{N \rightarrow \infty}\mathds{P}\left(\frac{n^{j,\omega}}{N} \ge \eta \right)  =0
	\end{equation}
	In other words, one has type-I BEC in probability if the gap condition \eqref{gap condition} is fulfilled.
\end{theorem}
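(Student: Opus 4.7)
The plan is to treat the two assertions separately. The second one is immediate from the gap condition: on the event $\Omega_N^{c_2, c_3}$, whose probability tends to one by hypothesis, $\mu_N^{\omega} < E_N^{1,\omega}$ together with the definition of $\Omega_N^{c_2, c_3}$ gives
\begin{equation*}
E_N^{j,\omega} - \mu_N^{\omega} \ge E_N^{c_2+1,\omega} - E_N^{1,\omega} \ge c_3 N^{-1+\eta_1}
\end{equation*}
for every $j \ge c_2 + 1$. The elementary bound $(\mathrm{e}^x - 1)^{-1} \le 1/x$ for $x > 0$ then yields $n_N^{j,\omega}/N \le (\beta c_3)^{-1} N^{-\eta_1} \to 0$ uniformly on this event, which proves the second assertion.

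For the first, set $R_N := N^{-1}\sum_{j > c_2} n_N^{j,\omega} = 1 - N^{-1}\sum_{j=1}^{c_2} n_N^{j,\omega}$; statement \eqref{(2.18)} is equivalent to $R_N \to \rho_c(\beta)/\rho$ in probability. I would prove this by sandwiching $R_N$ between two quantities with controlled $N \to \infty$ limit for each fixed auxiliary $\epsilon > 0$, and then letting $\epsilon \to 0^+$. Introduce
\begin{equation*}
W_N(\epsilon) := \frac{1}{N}\sum_{j :\, E_N^{j,\omega} - E_N^{1,\omega} > \epsilon} n_N^{j,\omega}, \qquad W_\infty(\epsilon) := \rho^{-1}\int_{(\epsilon, \infty)} \mathcal B(E)\, \mathrm d \mathcal N_\infty(E).
\end{equation*}
A standard argument using Assumption~\ref{assumptions}~(\ref{assumption 2}) (vague convergence of $\mathcal N_N^{\omega}$), Theorem~\ref{TheoremGENBEC} ($\mu_N^{\omega} \to 0$ $\mathds P$-a.s.\ under $\rho > \rho_c(\beta)$), Assumption~\ref{assumptions}~(\ref{assumption 1}) ($E_N^{1,\omega} \to 0$ $\mathds P$-a.s.), and the smoothness plus exponential decay of $\mathcal B$ on $[\epsilon,\infty)$ gives $W_N(\epsilon) \to W_\infty(\epsilon)$ $\mathds P$-a.s.\ for each fixed $\epsilon > 0$. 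The lower bound is then immediate:
\begin{equation*}
R_N \ge W_N(\epsilon) - \frac{1}{N}\sum_{j \le c_2,\, E_N^{j,\omega} - E_N^{1,\omega} > \epsilon} n_N^{j,\omega},
\end{equation*}
where each term in the subtracted sum is at most $[\beta(E_N^{j,\omega} - \mu_N^{\omega})]^{-1} \le (\beta\epsilon)^{-1}$, so the whole subtraction is bounded by $c_2/(\beta \epsilon N) \to 0$ $\mathds P$-a.s. Hence $\liminf_N R_N \ge W_\infty(\epsilon)$ $\mathds P$-a.s., and letting $\epsilon \to 0^+$ yields $\liminf_N R_N \ge \rho_c(\beta)/\rho$.

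The matching upper bound $\limsup_N R_N \le \rho_c(\beta)/\rho$ in probability is the main technical obstacle. I would decompose $R_N \le W_N(\epsilon) + R_N^{<}(\epsilon)$ with $R_N^{<}(\epsilon) := N^{-1}\sum_{j > c_2,\, E_N^{j,\omega} - E_N^{1,\omega} \le \epsilon} n_N^{j,\omega}$ and establish a uniform-in-$N$ bound $\mathds E R_N^{<}(\epsilon) \le f(\epsilon)$ with $f(\epsilon) \to 0$ as $\epsilon \to 0^+$. The pointwise gap bound $n_N^{j,\omega} \le (\beta c_3)^{-1} N^{1-\eta_1}$ alone is too loose for this (it gives only $\mathds E R_N^{<}(\epsilon) = O(N^{1-\eta_1}) \to \infty$); instead, following the two-scale decomposition used in the proof of Proposition~\ref{Proposition 1}, one splits the energy band at the $N$-dependent scale $\bigl[(1+\eta_1/2)\nu \gamma_d^{d/2}/\ln L_N\bigr]^{2/d}$, invoking Assumption~\ref{assumptions}~(\ref{assumption 4}) for the inner piece and the sharper bound $n_N^{j,\omega} \le [\beta(E_N^{j,\omega} - E_N^{1,\omega})]^{-1}$ together with the integrability $\int_0^\epsilon \mathcal N_\infty^{\mathrm I}(E)\, E^{-2}\, \mathrm d E < \infty$ from Remark~\ref{assumptions remark} for the outer piece. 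Combining this estimate with the $\mathds P$-a.s.\ convergence of $W_N(\epsilon)$ and Markov's inequality, a standard diagonal argument in $\epsilon$ then yields $\limsup_N R_N \le \rho_c(\beta)/\rho$ in probability, completing the proof.
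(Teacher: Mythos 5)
Your proposal is correct, and its technical core coincides with the paper's: the decisive estimate --- controlling the total occupation of the low-lying states with index $j > c_2$ by splitting the energy band at the $N$-dependent scale $\bigl[(1+\eta_1/2)\nu\gamma_d^{d/2}/\ln L_N\bigr]^{2/d}$ and invoking Assumptions~\ref{assumptions}~\eqref{assumption 3}--\eqref{assumption 4} together with the gap condition --- is exactly the content of Proposition~\ref{Proposition 1}, which you explicitly re-use. The assembly differs in two places. For the first assertion, the paper establishes $\lim_N \mathds E\, N^{-1}\sum_{j\le c_2} n_N^{j,\omega} = (\rho-\rho_c(\beta))/\rho$ (Proposition~\ref{Proposition 1} for the lower bound, Lemma~\ref{limsup 1/N sum1c2 n le rho_0/rho + (2 beta epsilon Ninfty} plus reverse Fatou for the upper) and then upgrades to $L^1$ convergence using the almost-sure one-sided bound; you instead sandwich $R_N$ directly and conclude convergence in probability via Markov's inequality and a diagonal argument in $\epsilon$. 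Both work; the paper's route hands Corollary~\ref{Corollary type-I BEC in rth mean} an $L^1$ statement essentially for free, while yours is marginally shorter if only convergence in probability is wanted. Two small imprecisions in your write-up are harmless: the exact a.s.\ convergence $W_N(\epsilon)\to W_\infty(\epsilon)$ is more than Lemma~\ref{Lemma 2 Gen BEC} delivers (one only gets two-sided bounds with errors of order $\epsilon^{-1}\mathcal N_\infty^{\mathrm I}(2\epsilon)$, which vanish as $\epsilon\to 0^+$ by Remark~\ref{assumptions remark} --- all you actually use), and your bound on $\mathds E\, R_N^{<}(\epsilon)$ is uniform only in the sense of $\limsup_N$, since it carries the term $\mathds P(\Omega\setminus\Omega_N^{c_2,c_3})\to 0$; again this suffices. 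For the second assertion your argument is genuinely simpler and more quantitative than the paper's: you read off $n_N^{j,\omega}/N \le (\beta c_3)^{-1}N^{-\eta_1}$ pointwise on $\Omega_N^{c_2,c_3}$ and obtain an explicit rate, whereas the paper deduces $\mathds E\, n_N^{c_2+1,\omega}/N \to 0$ indirectly from the convergence of $\mathds E\, N^{-1}\sum_{j\le c_2+1} n_N^{j,\omega}$ and monotonicity of the occupation numbers.
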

\begin{proof}
We define $\rho_0(\beta) := \rho - \rho_c(\beta)$ for convenience. Assumption \eqref{gap condition} together with Proposition~\ref{Proposition 1} imply that
	
	\begin{align}
	& \liminf\limits_{N \to \infty} \E  \int\limits_{\left( 0,E_N^{c_2,\omega} \right]} \mathcal{B}(E - \mu_N^{\omega}) \, \mathrm{d} \mathcal N_N^{\omega}(E)  \ge \rho_0(\beta) \ .
	\end{align}
		On the other hand, with Lemma~\ref{limsup 1/N sum1c2 n le rho_0/rho + (2 beta epsilon Ninfty}, the (reverse) Fatou Lemma, and the fact that
		\begin{align*}\
 \int\limits_{\left( 0,E_N^{c_2,\omega} \right]} \mathcal{B}(E - \mu_N^{\omega}) \, \mathrm{d} \mathcal N_N^{\omega}(E)	 \le \rho 
	 \end{align*}
	 for $\mathds P$-almost all $\omega \in \Omega$ and for all $N \in \mathds{N}$, we obtain
		\begin{align*}
	\limsup\limits_{N \to \infty} \E \int\limits_{\left( 0,E_N^{c_2,\omega} \right]} \mathcal{B}(E - \mu_N^{\omega}) \, \mathrm{d} \mathcal N_N^{\omega}(E)  & \le \E \limsup\limits_{N \to \infty} \int\limits_{\left( 0,E_N^{c_2,\omega} \right]} \mathcal{B}(E - \mu_N^{\omega}) \, \mathrm{d} \mathcal N_N^{\omega}(E)  \\
	& \le \rho_0(\beta) \ .
	\end{align*}
	In conclusion, remembering the relation $\rho=N/L_N$, 
	\begin{align} \label{zkjzskjh234kh}
	\lim\limits_{N \to \infty} \E  \int\limits_{\left( 0,E_N^{c_2,\omega} \right]} \mathcal{B}(E - \mu_N^{\omega}) \, \mathrm{d} \mathcal N_N^{\omega}(E) = \lim\limits_{N \to \infty} \E \ \dfrac{\rho}{N} \sum\limits_{j=1}^{c_2} n_N^{j,\omega}  = \rho_0(\beta) \ .
	\end{align}
	In addition,
	for an arbitrary $\widetilde \eta > 0$ and with Lemma~\ref{limsup 1/N sum1c2 n le rho_0/rho + (2 beta epsilon Ninfty} we obtain
	\begin{align*}
	0 & \le \limsup\limits_{N \to \infty} \int\limits_{\frac{1}{N} \sum_{j =1}^{c_2} n_N^{j,\omega} > \rho_0(\beta) / \rho} \left( \dfrac{1}{N} \sum\limits_{j =1}^{c_2} n_N^{j,\omega} - \dfrac{\rho_0(\beta)}{\rho} \right) \ud \mathds P(\omega) \\
	& \le \limsup\limits_{N \to \infty} \int\limits_{\rho_0(\beta)/ \rho < \frac{1}{N} \sum_{j =1}^{c_2} n_N^{j,\omega} < (\rho_0(\beta) / \rho) + \widetilde \eta} \left( \dfrac{1}{N} \sum\limits_{j =1}^{c_2} n_N^{j,\omega} - \dfrac{\rho_0(\beta)}{\rho} \right) \ud \mathds P(\omega) \\
	& \qquad + \,\limsup\limits_{N \to \infty} \int\limits_{\frac{1}{N} \sum_{j =1}^{c_2} n_N^{j,\omega} \ge (\rho_0(\beta) / \rho) + \widetilde \eta} \left( \dfrac{1}{N} \sum\limits_{j =1}^{c_2} n_N^{j,\omega} - \dfrac{\rho_0(\beta)}{\rho} \right) \ud \mathds P(\omega) \\
	& \le \widetilde \eta + \left( 1 - \dfrac{\rho_0(\beta)}{\rho} \right) \lim\limits_{N \to \infty} \mathds P \left( \frac{1}{N} \sum_{j =1}^{c_2} n_N^{j,\omega} \ge \dfrac{\rho_0(\beta)}{\rho} + \widetilde \eta \right) \\
	& \le \widetilde \eta \ .
	\end{align*}
	Thus, by \eqref{zkjzskjh234kh}, 
	\begin{align*}
	 0 & = \lim\limits_{N \to \infty} \E \left[ \dfrac{1}{N} \sum\limits_{j =1}^{c_2} n_N^{j,\omega} - \dfrac{\rho_0(\beta)}{\rho} \right] = \lim\limits_{N \to \infty} \int\limits_{\Omega} \left( \dfrac{1}{N} \sum\limits_{j =1}^{c_2} n_N^{j,\omega} - \dfrac{\rho_0(\beta)}{\rho} \right) \ud \mathds P(\omega) \\
	 & = \lim\limits_{N \to \infty} \int\limits_{\frac{1}{N} \sum_{j =1}^{c_2} n_N^{j,\omega} \le \rho_0(\beta) / \rho} \left( \dfrac{1}{N} \sum\limits_{j =1}^{c_2} n_N^{j,\omega} - \dfrac{\rho_0(\beta)}{\rho} \right) \ud \mathds P(\omega) \ .
	\end{align*}
	We conclude
	\begin{align*}
	 \lim\limits_{N \to \infty} \E \left| \dfrac{1}{N} \sum\limits_{j =1}^{c_2} n_N^{j,\omega} - \dfrac{\rho_0(\beta)}{\rho} \right|  
	 = 0
	\end{align*}
    and, consequently, for all $\eta > 0$,
    \begin{align*}
		 \lim\limits_{N \to \infty} \mathds P \left( \left| \dfrac{1}{N} \sum\limits_{j = 1}^{c_2} n_N^{j,\omega} - \dfrac{\rho_0(\beta)}{\rho} \right| < \eta \right) = 1 \ .
	\end{align*}
	Finally, we show the last part of the theorem: Using the fact that $N^{-1} \sum_{j=1}^{c_2 + 1} n_N^{j,\omega} \le 1$ for $\mathds P$-almost all $\omega \in \Omega$ and for all $N \in \mathds{N}$, Lemma~\ref{limsup 1/N sum1c2 n le rho_0/rho + (2 beta epsilon Ninfty}, and the (reverse) Fatou Lemma we obtain
		 \begin{align*}
		\limsup\limits_{N \to \infty} \mathds E \ \dfrac{1}{N} \sum\limits_{j=1}^{c_2 + 1} n_N^{j,\omega}  \le \dfrac{\rho_0(\beta)}{\rho} \ .
		\end{align*}
	In addition,
	\begin{align*}
	 \liminf\limits_{N \to \infty} \E \ \dfrac{1}{N} \sum\limits_{j =1}^{c_2 + 1} n_N^{j,\omega}  & \ge \lim\limits_{N \to \infty} \E \ \dfrac{1}{N} \sum\limits_{j =1}^{c_2} n_N^{j,\omega} = \dfrac{\rho_0(\beta)}{\rho} \ ,
	 \end{align*}
	 see \eqref{zkjzskjh234kh}. Therefore,
	 \begin{align*}
	 \dfrac{\rho_0(\beta)}{\rho} & = \lim\limits_{N \to \infty} \E \ \dfrac{1}{N} \sum\limits_{j =1}^{c_2 + 1} n_N^{j,\omega}  
	 = \dfrac{\rho_0(\beta)}{\rho} + \lim\limits_{N \to \infty} \E \ \dfrac{n_N^{c_2 + 1,\omega}}{N}   \ .
	\end{align*}
 For any $j \ge c_2 + 1$ we thus conclude, due to the fact that $n_N^{j,\omega} \le n_N^{c_2 + 1,\omega}$ for $\mathds P$-almost all $\omega \in \Omega$ and for all $N \in \mathds{N}$,
 \begin{align*}
	 \lim\limits_{N \to \infty} \mathds E  \left| \dfrac{n_N^{j,\omega}}{N} \right| = 0
	\end{align*}
	and, in particular, for all $\eta > 0$, 
	\begin{align*}
	 \lim\limits_{N \to \infty} \mathds P\left( \dfrac{n_N^{j,\omega}}{N} < \eta \right) = 1 \ .
	\end{align*}
\end{proof}

%
\begin{remark}\label{RemarkXXX} Since $c_2 \, n_N^{1,\omega} \ge \sum_{j=1}^{c_2} n_N^{j,\omega}$ for $\mathds P$-almost all $\omega \in \Omega$ and for all $N \in \mathds{N}$, Theorem~\ref{MainResult} immediately implies that, for $\rho > \rho_c$ and for all $\eta > 0$, 
		\begin{align*}
		\lim\limits_{N \to \infty} \mathds P \left( \dfrac{1}{N} n_N^{1,\omega} > c_2^{-1} \dfrac{\rho - \rho_c(\beta)}{\rho} - \eta \right) 
		= 1\ .
		\end{align*}
		This shows that the ground state is macroscopically occupied in probability. 
	\end{remark}
In a next result we establish another consequence of Theorem~\ref{MainResult}, namely, existence of BEC in the $rth$ mean. 
\begin{cor}[Type-I BEC in the $r$th mean]\label{Corollary type-I BEC in rth mean} Assume that $\rho > \rho_{c}(\beta)$ and that the gap condition \eqref{gap condition} is fulfilled. Then, for all $r \ge 1$, one has
	\begin{equation}
	\lim_{N \rightarrow \infty}\mathds{E}\left|\frac{1}{N}\sum_{j=1}^{c_2}n^{j,\omega}_N-\frac{\rho-\rho_c(\beta)}{\rho}   \right|^r  =0
	\end{equation}
	as well as
	\begin{equation}
	\lim_{N \rightarrow \infty}\mathds{E}\left|\frac{n^{j,\omega}_N}{N} \right|^r  =0
	\end{equation}
	for every $j \geq c_2$.
\end{cor}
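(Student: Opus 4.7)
The plan is to promote the convergence in probability from Theorem~\ref{MainResult} to convergence in the $r$th mean, exploiting the uniform almost sure boundedness of the random variables involved. The argument is essentially an application of the bounded convergence theorem and reuses work already carried out in the proof of Theorem~\ref{MainResult}.

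First I would record the uniform almost sure bound. Since the chemical potential $\mu_N^\omega$ is chosen precisely so that $\sum_{j=1}^{\infty} n_N^{j,\omega} = N$ for $\mathds P$-almost all $\omega$ and all $N \in \mathds{N}$, one has $0 \le N^{-1} \sum_{j=1}^{c_2} n_N^{j,\omega} \le 1$ and $0 \le n_N^{j,\omega}/N \le 1$ almost surely. Since also $(\rho - \rho_c(\beta))/\rho \in [0,1]$, both random sequences
$$X_N := \frac{1}{N} \sum_{j=1}^{c_2} n_N^{j,\omega} - \frac{\rho - \rho_c(\beta)}{\rho}, \qquad Y_N^{(j)} := \frac{n_N^{j,\omega}}{N}$$
are bounded in absolute value by a deterministic constant (say $C=2$), uniformly in $N$ and $\omega$.

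Next I would use the elementary pointwise inequality $|X|^r \le C^{r-1} |X|$, valid whenever $|X| \le C$, to reduce the $L^r$ assertions to their $L^1$ counterparts. The required $L^1$ convergence was explicitly established at the end of the proof of Theorem~\ref{MainResult}, where it was shown that $\lim_{N \to \infty} \E |X_N| = 0$ and $\lim_{N \to \infty} \E Y_N^{(j)} = 0$ for every $j \ge c_2 + 1$. Multiplying through by $C^{r-1}$ completes the argument. Equivalently, since Theorem~\ref{MainResult} already delivers convergence in probability and the sequences are uniformly bounded by the deterministic constant $C$, the bounded convergence theorem directly yields convergence in $L^r$ for every $r \ge 1$.

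No serious obstacle is expected; this is essentially a routine corollary, and the only input beyond Theorem~\ref{MainResult} itself is the a priori identity $\sum_j n_N^{j,\omega} = N$ coming from the grand-canonical constraint. I read the range ``$j \ge c_2$'' in the second displayed limit as $j \ge c_2 + 1$, matching the excited-state statement in Theorem~\ref{MainResult}; otherwise the claim would contradict the macroscopic occupation asserted in the first displayed limit.
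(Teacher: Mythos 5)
Your proposal is correct and follows essentially the same route as the paper: Theorem~\ref{MainResult} gives convergence in probability, and the $\mathds P$-almost sure uniform boundedness of the two sequences (which the paper also invokes) upgrades this to convergence in the $r$th mean via the bounded/dominated convergence theorem. Your reading of ``$j \ge c_2$'' as ``$j \ge c_2+1$'' is the right one, since for $j=c_2$ the claim would contradict the first displayed limit.
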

\begin{proof} The proof readily follows from Theorem~\ref{MainResult} taking into account standard results from probability theory. 
	Namely, from Theorem~\ref{MainResult} we conclude that the two sequences of random variables 
	\begin{equation*}
	\left(\frac{1}{N}\sum_{j=1}^{c_2}n^{j,\omega}_N-\frac{\rho-\rho_c(\beta)}{\rho}    \right)_{N \in \mathds{N}} \ \text{and}\ \left(\frac{n^{c_2+1,\omega}_N}{N} \right)_{N \in \mathds{N}}
	\end{equation*}
	converge to zero in probability. In addition, since 
	\begin{equation*}
	-1 \leq \frac{1}{N}\sum_{j=1}^{c_2}n^{j,\omega}_N-\frac{\rho-\rho_c(\beta)}{\rho} \leq 1 
	\end{equation*}
	and
	\begin{equation*}
	0 \leq \frac{n^{c_2+1,\omega}_N}{N} \leq 1
	\end{equation*}
	for $\mathds P$-almost all $\omega \in \Omega$, both sequences are $\mathds P$-almost surely uniformly bounded which then implies the statement. 
\end{proof}

\begin{cor}[Almost sure macroscopic occupation of the ground state]\label{MacroscopicOccupationGroundstate}
	If the requirements of Theorem~\ref{MainResult} are fulfilled, then the ground state is $\mathds P$-almost surely macroscopically occupied,
		\begin{align*}
		\mathds P \left( \limsup\limits_{N \to \infty} \dfrac{n_N^{1,\omega}}{N} > 0 \right) = 1 \ .
		\end{align*}
		%
	\end{cor}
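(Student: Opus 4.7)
The plan is to deduce this corollary from Theorem~\ref{MainResult} by passing from convergence in probability to an almost sure subsequential statement, which is all that the $\limsup$ in the conclusion requires.

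First, I would recall (or invoke via Corollary~\ref{Corollary type-I BEC in rth mean}) that
\begin{align*}
X_N:=\frac{1}{N}\sum_{j=1}^{c_2} n_N^{j,\omega} \ \xrightarrow[N\to\infty]{}\ \frac{\rho-\rho_c(\beta)}{\rho}=:a\,c_2
\end{align*}
in probability (in fact in $L^r$ for all $r\ge 1$), where I set $a:=c_2^{-1}(\rho-\rho_c(\beta))/\rho>0$. Note that $a>0$ is the crucial quantitative input: it is strictly positive precisely because we are in the regime $\rho>\rho_c(\beta)$.

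Next, I would apply the classical fact that convergence in probability of a sequence of real random variables implies almost sure convergence along some subsequence: choose $(N_k)_{k\in\mathds N}$ with $N_k\to\infty$ such that $X_{N_k}\to c_2 a$ $\mathds P$-almost surely. Using that the occupation numbers are $\mathds P$-a.s.\ ordered, $n_N^{1,\omega}\ge n_N^{j,\omega}$ for every $j\ge 1$ (which is precisely what was used in Remark~\ref{RemarkXXX}), one has $c_2\,n_{N_k}^{1,\omega}/N_k \ge X_{N_k}$ $\mathds P$-a.s., and therefore
\begin{align*}
\liminf_{k\to\infty}\frac{n_{N_k}^{1,\omega}}{N_k}\ \ge\ \frac{1}{c_2}\lim_{k\to\infty}X_{N_k}\ =\ a\ >\ 0
\end{align*}
$\mathds P$-almost surely.

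Finally, since $\limsup_{N\to\infty} n_N^{1,\omega}/N \ge \limsup_{k\to\infty} n_{N_k}^{1,\omega}/N_k \ge \liminf_{k\to\infty} n_{N_k}^{1,\omega}/N_k$, the inequality above yields $\limsup_{N\to\infty} n_N^{1,\omega}/N \ge a>0$ on a set of full $\mathds P$-measure, which is exactly the claim. The only subtle point is that Theorem~\ref{MainResult} gives convergence in probability rather than pointwise a.s.\ convergence, so one cannot directly conclude an a.s.\ statement about $\lim_{N\to\infty}$ itself; but because the corollary only asks for $\limsup$, the subsequence extraction is enough and is in fact the natural bridge. No further probabilistic machinery (Borel--Cantelli, rates of convergence, etc.) is needed.
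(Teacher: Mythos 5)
Your proposal is correct and follows essentially the same route as the paper's own proof: extract from the convergence in probability in \eqref{(2.18)} a subsequence along which $\frac{1}{N_k}\sum_{j=1}^{c_2}n^{j,\omega}_{N_k}$ converges $\mathds P$-almost surely to $\frac{\rho-\rho_c(\beta)}{\rho}$, and then combine the resulting lower bound on $\limsup_{N}\frac{1}{N}\sum_{j=1}^{c_2}n^{j,\omega}_{N}$ with the ordering $\sum_{j=1}^{c_2}n^{j,\omega}_{N}\le c_2\, n^{1,\omega}_{N}$ from Remark~\ref{RemarkXXX}. The only cosmetic difference is that you apply the ordering inequality along the subsequence before passing to the full $\limsup$, whereas the paper does it afterwards; the two are equivalent.
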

	\begin{proof} The statement can be proved in a similar vein as \cite[Theorem 3.5]{KPS182}, taking Remark~\ref{RemarkXXX} into account. 
		
		Alternatively, \eqref{(2.18)} implies, for $\mathds P$-almost all $\omega \in \Omega$, the existence of a subsequence $\left(\frac{1}{N_k}\sum_{j=1}^{c_2}n^{j,\omega}_{N_k}\right)_{k\in \mathds N}$ converging to $\frac{\rho-\rho_c(\beta)}{\rho}$ in the limit $k \rightarrow \infty$. Consequently,
		\begin{equation*}
	\frac{\rho-\rho_c(\beta)}{\rho} \leq 	\limsup_{N \rightarrow \infty} \frac{1}{N}\sum_{j=1}^{c_2}n^{j,\omega}_{N} \leq c_2 \limsup\limits_{N \to \infty} \dfrac{n_N^{1,\omega}}{N}
		\end{equation*}
		$\mathds P$-almost surely, which implies the statement. 
	\end{proof}
\begin{remark} By the reverse triangle inequality we also see the following: Given the requirements in Theorem~\ref{MainResult} are met, then
	$$\lim\limits_{N \to \infty} \E \left( \dfrac{1}{N} \sum\limits_{j=1}^{c_2} n_N^{j,\omega} \right)^r  = \E \left( \dfrac{\rho - \rho_c(\beta)}{\rho} \right)^r $$
	for all $r \ge 1$. In particular,
	\begin{align*} 
	\liminf_{N\to\infty} \mathds E \ \dfrac{n_N^{1,\omega}}{N} \ge \dfrac{1}{c_2} \dfrac{\rho - \rho_c(\beta)}{\rho} \ ,
	\end{align*}
	i.e., the ground state is macroscopically occupied in expectation.
	\end{remark}
	%

	\begin{remark} \label{remark Poisson erfuellt bedinungen}
		A Poisson random potential on $\mathds R^d$, $d \in \mathds{N}$, with a single-impurity potential that is a non-negative, compactly supported, and bounded function fulfills all requirements of Assumptions \ref{assumptions}, see, for example, \cite[Theorem 4.6]{sznitman1998brownian},\cite[Theorem 5.20, 5.25, and Theorem 10.2]{pastur1992spectra}, as well as \cite{leschke2003survey}.
		Theorem~\ref{MainResult} and Corollaries~\ref{Corollary type-I BEC in rth mean} and \ref{MacroscopicOccupationGroundstate} thus apply to such a  Poisson random potential. To the best of our knowledge, however, it is so far not known whether the ``gap condition'' of Theorem~\ref{MainResult} is then valid. In the next chapter, we study a special case of Poisson random potentials on $\mathds R$ for which we are able to confirm the ``gap condition.''
	\end{remark}

\section{An example: the Luttinger--Sy model} \label{section Luttinger Sy infinite strength}

In this section we are concerned with BEC in the Luttinger--Sy model (LS-model) \cite{luttinger1973bose,luttinger1973low} for which all the assumptions on the Hamiltonian $H_{N,\omega}$, see Assumptions~\ref{assumptions} and before, are fulfilled. It is our aim to show that also the ``gap condition'' as formulated in Theorem~\ref{MainResult} is realized in the LS-model. Hence, the corresponding versions of Theorem~\ref{MainResult} and Corollaries~\ref{Corollary type-I BEC in rth mean} and \ref{MacroscopicOccupationGroundstate} hold for the LS-model, too.

 The Luttinger--Sy model is a random one-dimensional model which is obtained by dissecting the real line $\mathds{R}$ into a ($\mathds P$-almost surely) countable number of intervals via a set of points $\{\hat x_j(\omega)\}_{j}$ generated by a Poisson point process of intensity $\nu > 0$. These points can $\mathds P$-almost surely be labeled by $\mathds Z$ such that
 $$\ldots < \hat x_{-1}(\omega) < \hat x_0(\omega) < 0 < \hat x_1(\omega) < \hat x_2(\omega) < \ldots \ .$$ We refer to \cite[Chapter 4]{kingman1993poisson} for more details regarding the Poisson point process on $\mathds R$. At each point $\hat x_j(\omega)$ one then imposes a Dirichlet boundary condition which is informally equivalent to saying that one places a Dirac-$\delta$ potential of infinite strength $\gamma=\infty$ at each Poisson point $\hat x_j(\omega)$. Hence, the one-particle Hamiltonian in the LS-model is informally given by
\begin{equation}\label{OneParticleHamiltonian}
H_{\omega}=-\frac{\ud^2}{\ud x^2}+\gamma\sum_{j\in \mathds{Z}}\delta(x-\hat x_j(\omega))\ , \quad  \omega \in \Omega\ ,\quad \gamma=\infty\ ,
\end{equation}
$(\Omega,\mathscr A,\mathds{P})$ denoting the underlying probability space. BEC is investigated employing a thermodynamic limit which makes it necessary - in a first step - to restrict $H_{\omega}$ to finite volume. For this one introduces the window $\Lambda_N:=\left(-L_N/2,+L_N/2\right) \subset \mathds{R}$ where the length $L_N$ is determined through the relation $L_N:=N/\rho$, $N \in \mathds{N}$ denotes the number of particles, and $\rho > 0$ is the particle density. On $L^2(\Lambda_N)$ one then introduces the (informal) finite-volume one-particle Hamiltonian 
\begin{equation}
H_{N,\omega}:=-\frac{\ud^2}{\ud x^2}+\gamma\sum_{j\in \mathds{Z}: \hat x_j(\omega) \in \Lambda_N}\delta(x-\hat x_j(\omega))\ , \quad  \omega \in \Omega\ ,\quad \gamma=\infty\  .
\end{equation}
We refer to \cite{LenobleZagrebnovLuttingerSy,KPS18} for a rigorous realization of $H_{N,\omega}$ as a self-adjoint operator.  

Since one imposes Dirichlet boundary conditions at each point $\hat x_j(\omega)$, the window $\Lambda_N$ effectively consists of a collection of smaller intervals of lengths $l_N^{j,\omega}:=|(\hat x_j(\omega), \hat x_{j+1}(\omega)) \cap \Lambda_N|$ on each of which the operator $H_{N,\omega}$ acts as the standard one-dimensional Dirichlet Laplacian. As a consequence, the eigenvalues of $H_{N,\omega}$ are just the collection of all the eigenvalues coming from these Dirichlet Laplacians; their eigenvalues, on the other hand, are explicitly given: namely, the $n$th eigenvalue on the interval with length $l_N^{j,\omega}$ simply is $(\pi n/l_N^{j,\omega})^2$, $n \in \mathds{N}$.

We denote the intervals generated by the Poisson point process within $\Lambda_N$ by
$$I_N^{j,\omega} := (\hat x_j(\omega), \hat x_{j+1}(\omega)) \cap \Lambda_N$$
for all $j \in \mathds Z$ and all $N \in \mathds{N}$. Also, let $l^{k,\omega}_{N,>}$, $k \in \mathds{N}$, be the $k$th largest length of these intervals. We denote by $\kappa_N^{\omega}$ the number of Poisson points within $\Lambda_N$.
Consequently, for $\mathds P$-almost all $\omega \in \Omega$ and for all $N \in \mathds{N}$ there are $\kappa_N^{\omega} + 1$ many lengths $l_N^{j,\omega}$ that are larger than zero. Furthermore, $\{ \ljomega : j \in \mathds Z \backslash \{0\} \}\cup \{|(\hat x_0(\omega),0)|\} \cup \{|(0,\hat x_1(\omega))|\}$ where $\ljomega := |(\hat x_j(\omega), \hat x_{j+1}(\omega)|$ are mutually independent, exponentially distributed random variables with common probability density $\nu \e^{-\nu l}$ \cite[Section 4.1]{kingman1993poisson}. 

Another main advantage of considering the LS-model is that there is an explicit expression for the integrated density of states: One has
\begin{equation}
\mathcal{N}^{\mathrm I}_{\infty}(E)=\nu\frac{\mathrm{e}^{-\nu\pi E^{-1/2}}}{1-\mathrm{e}^{-\nu\pi E^{-1/2}}} \mathds 1_{(0,\infty)}(E)\ , \quad E \in \mathds R\ ,
\end{equation}
see \cite[Proposition 3.2]{lenoble2004bose} and also \cite{EggarterSomeExact, luttinger1973bose, luttinger1973low}. 

Now we state a version of Theorem~\ref{MainResult} for the LS-model. For this, recall the  ``gap condition'' \eqref{Definition Omega4}. 
\begin{theorem}[Gap condition in the LS-model] \label{Theorem gap condition LSmodel} In the LS-model one has 
	$$ \lim\limits_{N \to \infty} \mathds P(\Omega_{N}^{1,1}) = 1 \ .$$
\end{theorem}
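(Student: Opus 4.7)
The plan is to reduce the event $\Omega_N^{1,1}$ to simultaneous control of the two largest Poisson spacings inside $\Lambda_N$. Write $l_1 := \llargest$ and $l_2 := \llargestzwei$. Since $H_{N,\omega}$ decouples into Dirichlet Laplacians on the sub-intervals, the spectrum of $H_{N,\omega}$ is $\{(\pi n/l_N^{j,\omega})^2 : j,\, n\}$, so $E_N^{1,\omega} = \pi^2/l_1^2$ and $E_N^{2,\omega} = \min\{\pi^2/l_2^2,\, 4\pi^2/l_1^2\}$. With $d = 1$ and $\gamma_1 = \pi^2$, the energy ceiling in the definition \eqref{Definition Omega4} of $\Omega_N^{1,1}$ reads $E_N^{1,\omega} \leq [(1+\eta_1/4)\nu\pi/\ln L_N]^2$, which is equivalent to $l_1 \geq \ln L_N / [(1+\eta_1/4)\nu]$. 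Thus both conditions defining $\Omega_N^{1,1}$ are determined by the pair $(l_1, l_2)$.

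First, I would prove that with probability tending to one the largest spacing $l_1$ lies in the window $[\ln L_N/((1+\eta_1/4)\nu),\; 2\ln L_N/\nu]$. By the interarrival description of the Poisson process recalled in the paper, the interior spacings within $\Lambda_N$ are i.i.d.\ $\mathrm{Exp}(\nu)$ random variables (conditionally on $\kappa_N^{\omega}$), and $\kappa_N^{\omega}$ concentrates around $\nu L_N = \nu N/\rho$. Applying $\mathds P(\max_i Y_i \leq t \mid \kappa_N^\omega) = (1 - \e^{-\nu t})^{\kappa_N^{\omega}}$ at $t = \ln L_N/[(1+\eta_1/4)\nu]$ yields the lower bound with stretched-exponentially small failure probability of order $\exp(-c L_N^{\eta_1/(4+\eta_1)})$, and the union bound gives $\mathds P(l_1 > 2\ln L_N/\nu) = O(L_N^{-1})$ for the upper bound. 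The two boundary intervals contribute identically distributed corrections that are absorbed.

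Second, I would control the top spacing $l_1 - l_2$ using the classical order-statistics fact that, conditionally on $\kappa_N^{\omega}$, the difference between the two largest of i.i.d.\ $\mathrm{Exp}(\nu)$ variables is again $\mathrm{Exp}(\nu)$-distributed, by memorylessness. Consequently, for any threshold $\varepsilon_N > 0$,
\begin{equation*}
\mathds P(l_1 - l_2 < \varepsilon_N) \leq 1 - \e^{-\nu \varepsilon_N} + o(1) \leq \nu\varepsilon_N + o(1).
\end{equation*}
Choosing $\varepsilon_N := K (\ln L_N)^3 N^{-1+\eta_1}$ with $K := 8/(\pi^2 \nu^3)$ gives $\mathds P(l_1 - l_2 < \varepsilon_N) \to 0$, since $(\ln L_N)^3 N^{-1+\eta_1} \to 0$.

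Finally, on the intersection of these high-probability events I verify $\Omega_N^{1,1}$ directly. The energy ceiling follows from the lower bound on $l_1$. For the spectral gap, split two cases. If $l_2 < l_1/2$, then $E_N^{2,\omega} = 4\pi^2/l_1^2$ and $E_N^{2,\omega} - E_N^{1,\omega} = 3\pi^2/l_1^2 \geq 3\pi^2\nu^2/(2\ln L_N)^2$, which exceeds $N^{-1+\eta_1}$ for all large $N$. If $l_2 \geq l_1/2$, then $E_N^{2,\omega} = \pi^2/l_2^2$ and, using $l_1 + l_2 \geq l_1$ together with $l_1 l_2 \leq l_1^2$,
\begin{equation*}
E_N^{2,\omega} - E_N^{1,\omega} = \frac{\pi^2(l_1-l_2)(l_1+l_2)}{(l_1 l_2)^2} \geq \frac{\pi^2(l_1 - l_2)}{l_1^3} \geq \frac{\pi^2 \nu^3 \varepsilon_N}{8 (\ln L_N)^3} = N^{-1+\eta_1}.
\end{equation*}
The main obstacle in turning this sketch into a rigorous proof is the honest bookkeeping required to pass from statements conditional on $\kappa_N^{\omega}$ (where the interior spacings are genuinely i.i.d.\ exponential) to unconditional statements, while absorbing the contribution of the two boundary intervals and the Poisson fluctuations of $\kappa_N^{\omega}$; once that is handled, the remainder is a concatenation of elementary inequalities.
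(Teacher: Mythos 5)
Your proposal follows essentially the same route as the paper's proof: lower-bound the largest spacing $\llargest$ to obtain the energy ceiling, lower-bound the gap between the two largest spacings via exponential order statistics, and convert this to the spectral gap using the identity $\pi^2/l_2^2-\pi^2/l_1^2=\pi^2(l_1-l_2)(l_1+l_2)/(l_1l_2)^2$ together with the same case distinction (the second eigenvalue coming either from the second-largest interval or from the first excited state on the largest one). The algebra in your final step is correct and, with your threshold $\varepsilon_N$ of order $(\ln L_N)^3N^{-1+\eta_1}$, matches the paper's estimate.

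One premise does need repair before the sketch becomes a proof. Conditionally on $\kappa_N^{\omega}=k$, the spacings inside $\Lambda_N$ are \emph{not} i.i.d.\ $\mathrm{Exp}(\nu)$: the $k$ points are then i.i.d.\ uniform on $\Lambda_N$, so the spacings are exchangeable and constrained to sum to $L_N$. Moreover, the interval $I_N^{0,\omega}=(\hat x_0(\omega),\hat x_1(\omega))$ containing the origin is length-biased (a sum of two independent exponentials), so it cannot be absorbed as ``one more identically distributed correction'' alongside the two boundary intervals and could a priori be the largest or second-largest interval, invalidating the order-statistics computation. The paper circumvents both issues by working with the \emph{unconditional} spacings $\ljomega$, $j\in\mathds Z\setminus\{0\}$, which genuinely are i.i.d.\ $\mathrm{Exp}(\nu)$, by excluding $I_N^{0,\omega}$ and the two outer intervals from the order statistics and showing each is at most $\tfrac{1}{2\nu}\ln N$ with high probability, and by intersecting with $\{\kappa_N^{\omega}=k\}$ only through the crude bound $\mathds P(A\cap B)\ge\mathds P(A)+\mathds P(B)-1$ (Lemmas~\ref{Lemma hat lj equal lj 1}, \ref{lemma llargest to infty almost surely}, and \ref{Theorem Abstand grosstes und viertgrosstes Intervall}). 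With your parenthetical claim about conditional i.i.d.-ness replaced by this unconditional description, and the origin interval handled separately, the rest of your argument goes through and essentially reproduces the paper's proof; as a small bonus, your direct use of the exponential law of the top gap avoids the summation over $k$ in Lemma~\ref{Theorem Abstand grosstes und viertgrosstes Intervall} that forces the paper to restrict to $\eta_1<1/2$.
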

\begin{proof}

We will prove that the gap condition is actually fulfilled for any value $0 < \eta_1 < 1/2$ of the constant $\eta_1$ appearing in Assumptions~\ref{assumptions}~\eqref{assumption 4} and definition \eqref{Definition Omega4}.
	
	Let $0 < \eta_1 < 1/2$ be arbitrary. We conclude with Lemma~\ref{lemma llargest to infty almost surely} that
	\begin{align*}
	\mathds P \left( \lim\limits_{N \to \infty} \left[ \left( 1 + \dfrac{\eta_1}{3} \right)^{-1} \dfrac{\ln(L_N)}{\nu} - \llargest \right]_+ = 0 \right) = 1
	\end{align*}
	and consequently, for an arbitrary $\eta > 0$, 
	\begin{align*}
	\lim\limits_{N \to \infty} \mathds P \left( \llargest \le \left( 1 + \dfrac{\eta_1}{2} \right)^{-1} \dfrac{\ln(L_N)}{\nu} \right) & \le \lim\limits_{N \to \infty} \mathds P \left(  \left[ \left( 1 +\dfrac{\eta_1}{3} \right)^{-1} \dfrac{\ln(L_N)}{\nu} - \llargest \right]_+  > \eta \right) \\
	& = 0
	\end{align*}
	Therefore, since $E_N^{1,\omega}=\pi^2/\left(\llargest\right)^2 $, 
	\begin{align*}
	\lim\limits_{N \to \infty} \mathds P \left( E_N^{1,\omega} \le \left[ \left( 1 + \dfrac{\eta_1}{2} \right) \dfrac{\nu \pi}{\ln(L_N)} \right]^2 \right) = 1\ .
	\end{align*}
	The show the second part of the gap condition, we introduce a few definitions first: We denote by $\llargesttilde$ the largest and by $\llargesttildezwei$ the second largest length of all intervals $\{I_N^{j,\omega}\}_{j \in \mathds Z\backslash\{0\}}$ without the two outer intervals
$$I_N^{\text{R},\omega} := I_N^{j_N^{\text{max},\omega},\omega} \quad \text{ where } \quad j_N^{\text{max},\omega} := \max\left\{ j \in \mathds Z: I_N^{j,\omega} > 0 \right\}\ ,$$
and
$$I_N^{\text{L},\omega} := I_N^{j_N^{\text{min},\omega},\omega} \quad \text{ where } \quad j_N^{\text{min},\omega} := \min\left\{ j \in \mathds Z : I_N^{j,\omega} > 0 \right\}\ .$$
as well as without $I_N^{0,\omega}$ for all $N \in \mathds{N}$ and all $\omega \in \Omega$ for which they exist, that is, for which it is $\kappa_N^{\omega} \ge 4$; otherwise, i.e., if $\kappa_N^{\omega} \le 4$, we set $\llargesttilde := 0$ and $\llargesttildezwei := 0$. Furthermore, we define the sets
	$$\Omega_N^{(\text{R})} := \left\{ \omega \in \Omega : |I_N^{\text{R},\omega}| \le \dfrac{1}{2\nu} \ln (N) \right\} \ ,$$
	$$\Omega_N^{(\text{L})} := \left\{ \omega \in \Omega : |I_N^{\text{L},\omega}| \le \dfrac{1}{2\nu} \ln (N) \right\} \ ,$$
	$$\Omega_N^{(0)} := \left\{ \omega \in \Omega : |I_N^{0,\omega}| \le \dfrac{1}{2\nu} \ln (N) \right\} \ .$$
	Note that for any $N \in \mathds N$,
	$$\mathds P\left( \Omega_N^{(\text{R})} \right) \ge 1 - \mathrm{e}^{- (1/2) \ln (N)} \ ,$$
	$$\mathds P\left( \Omega_N^{(\text{L})} \right) \ge 1 - \mathrm{e}^{- (1/2) \ln (N)} \ ,$$
	see, e.g., \cite[Section 4.1]{kingman1993poisson}, and
	\begin{align*}
	\mathds P\left( \Omega_N^{(0)} \right) & \ge \mathds P \left( \left\{ \omega \in \Omega: |(0, \hat x_1(\omega))| \le \dfrac{1}{4\nu} \ln(N) \right\} \cap \left\{ \omega \in \Omega: |(\hat x_0(\omega) ,0 )| \le \dfrac{1}{4\nu} \ln(N) \right\} \right) \\
	& \ge \left( 1 - \mathrm{e}^{- (1/4) \ln (N)} \right) + \left( 1 - \mathrm{e}^{- (1/4) \ln (N)} \right) -1 = 1 - 2 \mathrm{e}^{- (1/4) \ln (N)} \ .
	\end{align*}
	With Lemma~\ref{lemma llargest to infty almost surely}, with the fact that $\mathds P\left(\limsup_{N \to \infty} (\llargest / \ln(N)\right) \le \kappa/ \nu) = 1$ for all $\kappa > 2$, which can be shown in the same way as \cite[Lemma A.2]{KPS182}, and with Lemma~\ref{Theorem Abstand grosstes und viertgrosstes Intervall}, where we choose $a = 1$ and $\eta_1 < \hat \eta < 1/2$, we obtain 
	\begin{align*}
	& \lim\limits_{N \to \infty} \mathds P \left( \left( \dfrac{\pi}{\llargestzwei} \right)^2 - \left( \dfrac{\pi}{\llargest} \right)^2 > \dfrac{1}{N^{1-\eta_1}} \right) \\
	& \quad = \lim\limits_{N \to \infty} \mathds P \left( \llargest - \llargestzwei > \dfrac{\pi^{-2}}{N^{1-\eta_1}} \cdot \dfrac{\left( \llargest \llargestzwei \right)^2}{\llargest + \llargestzwei} \right) \\
	& \quad \ge \lim\limits_{N \to \infty} \mathds P \left( \llargest - \llargestzwei > \dfrac{\pi^{-2}}{N^{1-\eta_1}} \left( \llargest \right)^3 \right) \\
	& \quad \ge \lim\limits_{N \to \infty} \mathds P \left( \left\{ \omega \in \Omega :  \llargest - \llargestzwei > \dfrac{ \big( 3 \nu^{-1} \ln(N) \big)^3}{\pi^2 N^{1-\eta_1}} \right\} \right. \\
	& \qquad \qquad \qquad \quad \left. \cap \left\{ \omega \in \Omega : \dfrac{3}{4} \nu^{-1} \ln(N) \le \llargest \le 3 \nu^{-1} \ln(N) \right\}\right) \\
	& \quad \ge \lim\limits_{N \to \infty} \mathds P \left( \left\{ \omega \in \Omega : \llargesttilde - \llargesttildezwei > \dfrac{ \big( 3 \nu^{-1} \ln(N) \big)^3}{\pi^2 N^{1-\eta_1}} \right\} \right. \\
	& \qquad \qquad \qquad \quad \left. \cap \left\{ \omega \in \Omega : \dfrac{3}{4} \nu^{-1} \ln(N) \le \llargest \le 3 \nu^{-1} \ln(N) \right\} \cap \Omega_N^{(0)} \cap \Omega_N^{(\text{L})} \cap \Omega_N^{(\text{R})} \right) \\
	& \quad \ge \lim\limits_{N \to \infty} \mathds P \left( \llargesttilde - \llargesttildezwei > \dfrac{ 1}{N^{1-\hat \eta}} \right) \\  & \qquad \qquad \qquad + \, \lim\limits_{N \to \infty}  \mathds P \left( \left\{ \dfrac{3}{4} \nu^{-1} \ln(N) \le \llargest \le 3 \nu^{-1}\ln(N) \right\} \cap \Omega_N^{(0)} \cap \Omega_N^{(\text{L})} \cap \Omega_N^{(\text{R})} \right) - 1\\
	& \quad = 1 \ .
	\end{align*}
	On the other hand, for the energy gap between the ground-state energy and the first excited energy on the largest interval we have
	$$ \lim\limits_{N \to \infty} \mathds P \left( \left( \dfrac{2 \pi}{\llargest} \right)^2 - \left( \dfrac{\pi}{\llargest} \right)^2 \ge \dfrac{1}{3} \left( \dfrac{\nu \pi}{\ln(N)} \right)^2 \right) = 1 \ ,$$
	since $\lim_{N \to \infty} \mathds P(\llargest \le (3/\nu) \ln(N)) = 1$.
	
	Finally, since the second eigenvalue $E_N^{2,\omega}$ is either the energy of the first excited state on the largest interval $\llargest$ or the ground state energy of the second largest interval $\llargestzwei$, we obtain 
	\begin{align*}
	& \lim\limits_{N \to \infty} \mathds P \left( \Omega_{N}^{1,1} \right) \\
	& \quad \ge \lim\limits_{N \to \infty} \mathds P \left( E_N^{2,\omega} - E_N^{1,\omega} > \dfrac{1}{N^{1-\eta_1}} \right) + \lim\limits_{N \to \infty} \mathds P \left( E_N^{1,\omega} \le \left[ \left( 1 + \dfrac{\eta_1}{2} \right) \dfrac{\nu \pi}{\ln(L_N)} \right]^2 \right) - 1\\
	& \quad = 1 \ .
	\end{align*}
\end{proof}

\begin{remark} Theorem~\ref{Theorem gap condition LSmodel} shows that, in probability, only the ground state is macroscopically occupied in the LS-model. We also recall that Corollary~\ref{MacroscopicOccupationGroundstate} shows that the ground state is $\mathds P$-almost surely macroscopically occupied.
\end{remark}




\appendix

\section{Miscellaneous results} \label{Miscellaneous results}
In this appendix we collect various results, some of which we referred to in the previous text. Lemma~\ref{Lemma 2 Gen BEC} can be proved similarly as~\cite[Lemma A.7]{KPS182}.
	\begin{lemma} \label{Lemma 2 Gen BEC}
	 Under the assumptions of Theorem~\ref{TheoremGENBEC} one has, for $\rho > \rho_{c}(\beta)$ and given $\epsilon > 0$, $\mathds P$-almost surely
		\begin{align}
		\limsup\limits_{N \to \infty} \int\limits_{(\epsilon,\infty)} \mathcal B(E - \mu_N^{\omega}) \,\mathrm{d} \mathcal N_{N}^{\omega} (E) & \le \int\limits_{(\epsilon,\infty)} \mathcal B(E) \, \mathrm{d} \mathcal N_{\infty} ( E) + \dfrac{2}{\beta \epsilon} \mathcal N_{\infty}^{\mathrm{I}}(\epsilon)\ , \label{Lemma 2 Gen BEC HG 1} \\
		\liminf\limits_{N \to \infty} \int\limits_{(\epsilon,\infty)} \mathcal B(E - \mu_N^{\omega}) \,\mathrm{d} \mathcal N_{N}^{\omega} (E) & \ge \int\limits_{(\epsilon,\infty)} \mathcal B(E) \, \mathrm{d} \mathcal N_{\infty} ( E) - \dfrac{4}{\beta \epsilon} \mathcal N_{\infty}^{\mathrm{I}}(2\epsilon) \ . \label{Lemma 2 Gen BEC HG 2}
		\end{align}
	\end{lemma}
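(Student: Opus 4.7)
The plan is to combine two inputs that become available under the hypothesis $\rho>\rho_c(\beta)$: the $\mathds P$-almost sure convergence $\mu_N^{\omega}\to 0$ from Theorem~\ref{TheoremGENBEC}, and the $\mathds P$-almost sure vague convergence $\mathcal N_N^{\omega}\to\mathcal N_{\infty}$ from Assumption~\ref{assumptions}~\eqref{assumption 2}. The difficulty is that the integrand $E\mapsto\mathcal B(E-\mu_N^{\omega})$ is neither uniformly bounded nor compactly supported on $(\epsilon,\infty)$: it blows up as $E\downarrow\mu_N^{\omega}$ and decays exponentially at infinity, so vague convergence cannot be invoked in one stroke. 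I would therefore decompose $(\epsilon,\infty)$ into three pieces: a sliver $(\epsilon,\epsilon_0]$ close to the (shifted) singularity, a bounded middle window $(\epsilon_0,R]$, and a tail $(R,\infty)$, and handle each piece by a different mechanism.

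Fix an $\omega$ in the full-measure set on which both convergences hold, so that $|\mu_N^{\omega}|<\epsilon/2$ for $N$ large. The elementary estimate $\mathrm e^x-1\ge x$ gives $\mathcal B(E-\mu_N^{\omega})\le 2/(\beta\epsilon)$ on the sliver, whence
\[
\int\limits_{(\epsilon,\epsilon_0]}\mathcal B(E-\mu_N^{\omega})\,\ud\mathcal N_N^{\omega}(E)\;\le\;\frac{2}{\beta\epsilon}\,\bigl(\mathcal N_N^{\mathrm I,\omega}(\epsilon_0+)-\mathcal N_N^{\mathrm I,\omega}(\epsilon+)\bigr).
\]
Choosing $\epsilon_0$ to be a continuity point of $\mathcal N_{\infty}$ arbitrarily close to $\epsilon$ and passing to the limit via vague convergence yields the error $(2/\beta\epsilon)\mathcal N_{\infty}^{\mathrm I}(\epsilon)$ in \eqref{Lemma 2 Gen BEC HG 1}. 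On the middle window $(\epsilon_0,R]$, with $R$ also a continuity point of $\mathcal N_{\infty}$, the integrand is uniformly bounded and converges uniformly to $\mathcal B(E)$ as $\mu_N^{\omega}\to 0$, so vague convergence gives
\[
\lim\limits_{N\to\infty}\int\limits_{(\epsilon_0,R]}\mathcal B(E-\mu_N^{\omega})\,\ud\mathcal N_N^{\omega}(E)=\int\limits_{(\epsilon_0,R]}\mathcal B(E)\,\ud\mathcal N_{\infty}(E).
\]
The tail $R\to\infty$ is controlled by the bound $\mathcal B(E-\mu_N^{\omega})\le\mathcal B(E/2)$, valid once $|\mu_N^{\omega}|\le E/2$, against the polynomial Weyl-type growth of $\mathcal N_N^{\mathrm I,\omega}$ inherited uniformly in $N$ from Assumption~\ref{assumptions}~\eqref{assumption 3}, so the tail contribution is $o(1)$ in $R$ uniformly in $N$. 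Assembling the three pieces via triangle-inequality type estimates proves \eqref{Lemma 2 Gen BEC HG 1}.

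The $\liminf$ inequality \eqref{Lemma 2 Gen BEC HG 2} is obtained by an entirely symmetric decomposition; the factor $4$ and the argument $2\epsilon$ appear because the corresponding lower bound on $\mathcal B(E-\mu_N^{\omega})$ requires one to overshoot the sliver by a uniform amount of order $\epsilon$ rather than an arbitrarily small one. The principal obstacle is to execute the three-piece splitting coherently on a single event of full probability, always choosing thresholds at continuity points of $\mathcal N_{\infty}$; since continuity points are co-countable this is a harmless restriction, after which the estimation and passage to the limit are essentially the same as in the one-dimensional Luttinger--Sy case treated in the paper referenced in the statement.
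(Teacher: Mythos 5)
Your overall strategy --- splitting $(\epsilon,\infty)$ into a near-threshold sliver, a compact bulk window on which vague convergence plus the uniform convergence $\mathcal B(\cdot-\mu_N^{\omega})\to\mathcal B$ can be applied, and a high-energy tail --- is the natural one; note, though, that the paper itself gives no proof of this lemma (it defers to \cite[Lemma A.7]{KPS182}), so there is no in-paper argument to match line by line. Measured on its own terms, your proposal has two concrete soft spots. First, the tail: Assumption~\ref{assumptions}~\eqref{assumption 3} only controls $\mathds E\,\mathcal N_N^{\mathrm I,\omega}(E)$ for $0<E\le\widetilde E$ with $\widetilde E$ \emph{small}; it says nothing about the growth of $\mathcal N_N^{\mathrm I,\omega}$ at large energies, so it cannot furnish the uniform-in-$N$ polynomial bound you invoke to kill $\int_{(R,\infty)}$. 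What is actually needed there is a deterministic Weyl-type bound $\mathcal N_N^{\mathrm I,\omega}(E)\le C(1+E^{d/2})$ uniformly in $N$ (available from $V_{\omega}\ge 0$ and comparison with the free Dirichlet Laplacian), or some other a priori high-energy estimate; as written, the tail step fails for the reason you cite.

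Second, the bookkeeping of the error term. From your displayed sliver bound one gets $\limsup_N\int_{(\epsilon,\epsilon_0]}\le\frac{2}{\beta\epsilon}\limsup_N\mathcal N_N^{\omega}\bigl((\epsilon,\epsilon_0]\bigr)\le\frac{2}{\beta\epsilon}\,\mathcal N_{\infty}\bigl([\epsilon,\epsilon_0]\bigr)$, which tends to $\frac{2}{\beta\epsilon}\,\mathcal N_{\infty}(\{\epsilon\})$ as $\epsilon_0\downarrow\epsilon$ through continuity points --- \emph{not} to $\frac{2}{\beta\epsilon}\,\mathcal N_{\infty}^{\mathrm I}(\epsilon)=\frac{2}{\beta\epsilon}\,\mathcal N_{\infty}\bigl((0,\epsilon)\bigr)$, which is the mass of a disjoint set lying strictly \emph{below} the cutoff. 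In the standard argument that quantity enters for a different reason: to apply vague convergence one must replace the discontinuous integrand $\mathds 1_{(\epsilon,\infty)}(E)\,\mathcal B(E)$ by a continuous majorant supported down to $\epsilon/2$, and the mass picked up on $(\epsilon/2,\epsilon)$, weighted by $\sup_{E\ge\epsilon/2}\mathcal B(E)\le 2/(\beta\epsilon)$, is exactly $\frac{2}{\beta\epsilon}\mathcal N_{\infty}^{\mathrm I}(\epsilon)$; the symmetric minorant supported only above $2\epsilon$ produces the $\frac{4}{\beta\epsilon}\mathcal N_{\infty}^{\mathrm I}(2\epsilon)$ term in \eqref{Lemma 2 Gen BEC HG 2}. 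Your version, carried out literally, would in fact yield a different bound (a stronger one when $\mathcal N_{\infty}$ is atomless, as it is in the models considered), but the assertion that ``passing to the limit via vague convergence yields the error $(2/\beta\epsilon)\mathcal N_{\infty}^{\mathrm I}(\epsilon)$'' does not follow from the inequality you wrote down, and the factor $4$ and argument $2\epsilon$ in the lower bound are asserted rather than derived. Both defects are repairable, but as it stands the proof is not complete.
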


\begin{lemma} \label{limsup 1/N sum1c2 n le rho_0/rho + (2 beta epsilon Ninfty}
 If the particle density $\rho$ is larger than the critical density, $\rho > \rho_c(\beta)$, then for any $c \in \mathds{N}$ we $\mathds P$-almost surely have
 	\begin{align*} 
 \limsup\limits_{N \to \infty} \frac{1}{N} \sum_{j =1}^{c} n_N^{j,\omega} \le \dfrac{\rho_0(\beta)}{\rho} \ .
	\end{align*}
 \end{lemma}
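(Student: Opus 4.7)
My plan is to start from the constraint $\rho = N/L_N = \frac{1}{L_N} \sum_{j=1}^{\infty} n_N^{j,\omega}$, which can equivalently be written as $\int_{(0,\infty)} \mathcal{B}(E - \mu_N^{\omega}) \, \ud \mathcal{N}_N^{\omega}(E) = \rho$, and split the sum according to whether the eigenvalues lie below or above a fixed threshold $\epsilon > 0$. Concretely, for any $c \in \mathds{N}$ and any $\epsilon > 0$ I would write
\begin{equation*}
\frac{1}{N}\sum_{j=1}^{c} n_N^{j,\omega}
= 1 \;-\; \frac{1}{\rho}\int\limits_{(\epsilon,\infty)} \mathcal{B}(E-\mu_N^{\omega})\, \ud \mathcal{N}_N^{\omega}(E)
\;-\; \frac{1}{N}\sum_{j > c,\, E_N^{j,\omega}\le \epsilon} n_N^{j,\omega}
\;+\; \frac{1}{N}\sum_{j\le c,\, E_N^{j,\omega}>\epsilon} n_N^{j,\omega}.
\end{equation*}
The third term on the right is non-negative, so it can be dropped when producing an upper bound.

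Next I would handle the correction term $\frac{1}{N}\sum_{j\le c,\, E_N^{j,\omega}>\epsilon} n_N^{j,\omega}$. By Theorem~\ref{TheoremGENBEC}, since $\rho > \rho_c(\beta)$, we have $\mu_N^{\omega} \to 0$ $\mathds P$-almost surely, so for $\mathds P$-almost all $\omega$ and all sufficiently large $N$ one has $E_N^{j,\omega} - \mu_N^{\omega} \geq \epsilon/2$ whenever $E_N^{j,\omega} > \epsilon$; hence each such $n_N^{j,\omega}$ is bounded by $(\ue^{\beta \epsilon/2}-1)^{-1}$, and the whole sum, which contains at most $c$ terms, vanishes in the limit when divided by $N$.

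Taking $\limsup_{N\to\infty}$ and applying the lower bound \eqref{Lemma 2 Gen BEC HG 2} of Lemma~\ref{Lemma 2 Gen BEC} to control $\int_{(\epsilon,\infty)} \mathcal{B}(E-\mu_N^{\omega})\, \ud \mathcal{N}_N^{\omega}(E)$ from below, I obtain $\mathds P$-almost surely
\begin{equation*}
\limsup_{N\to\infty}\frac{1}{N}\sum_{j=1}^{c} n_N^{j,\omega}
\;\le\; 1 - \frac{1}{\rho}\left[\int\limits_{(\epsilon,\infty)} \mathcal{B}(E)\, \ud \mathcal{N}_{\infty}(E) \;-\; \frac{4}{\beta\epsilon}\mathcal{N}_{\infty}^{\mathrm I}(2\epsilon)\right].
\end{equation*}
Finally I send $\epsilon \to 0^+$: monotone convergence gives $\int_{(\epsilon,\infty)} \mathcal{B}(E)\, \ud \mathcal{N}_{\infty}(E) \to \rho_c(\beta)$ by definition \eqref{definition critical density}, while the error term $\frac{4}{\beta\epsilon}\mathcal{N}_{\infty}^{\mathrm I}(2\epsilon)$ vanishes in this limit by the first estimate of Remark~\ref{assumptions remark} (which itself follows from Assumption~\ref{assumptions}~\eqref{assumption 4}). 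The right-hand side becomes $1 - \rho_c(\beta)/\rho = \rho_0(\beta)/\rho$, yielding the claim.

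The only slightly delicate step is the treatment of the finitely many boundary terms $j \le c$ with $E_N^{j,\omega}>\epsilon$, but that is routine once the almost sure convergence $\mu_N^{\omega}\to 0$ from Theorem~\ref{TheoremGENBEC} is invoked. Everything else is a bookkeeping exercise combining Lemma~\ref{Lemma 2 Gen BEC} with the $\epsilon \to 0^+$ limits provided by Remark~\ref{assumptions remark}.
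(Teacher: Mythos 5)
Your proof is correct and follows essentially the same route as the paper's: split the density normalization at a threshold $\epsilon$, control the high-energy part from below via Lemma~\ref{Lemma 2 Gen BEC}, and let $\epsilon \to 0^+$ using Remark~\ref{assumptions remark}. The only (immaterial) difference is that you dispose of the finitely many terms with $j \le c$ and $E_N^{j,\omega} > \epsilon$ directly from the almost sure convergence $\mu_N^{\omega} \to 0$, whereas the paper invokes the monotonicity in $N$ of the Dirichlet eigenvalues; both work.
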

 \begin{proof}
 Let $\epsilon > 0$ be arbitrary. We recall the well-known fact that  $(E_N^{j,\omega})_{N \in \mathds{N}}$ is, for every fixed $j \in \mathds{N}$, a monotonically decreasing sequence. Hence, for all $1 < j \leq c$ the sequence $(E_N^{j,\omega})_{N \in \mathds{N}}$ either converges to a constant $b_j > 0$ or to zero. If the sequence does not converge to zero one obtains
 $$\lim\limits_{N \to \infty} \dfrac{n_N^{j,\omega}}{N} \le \beta^{-1} \lim\limits_{N \to \infty} \dfrac{ (E_N^{j,\omega} - \mu_N^{\omega})^{-1}}{N} \le \beta^{-1} \lim\limits_{N \to \infty} \dfrac{ 2 b_j^{-1} }{N} = 0 \ ,$$
 where we used the fact that $(\mu_N^{\omega})_{N \in \mathds{N}}$ also converges to zero, see Theorem~\ref{TheoremGENBEC}.
 Thus, in either case,
 \begin{align*}
 \limsup\limits_{N \to \infty} \frac{1}{N} \sum_{j =1}^{c} n_N^{j,\omega} & \le \limsup\limits_{N \to \infty} \frac{1}{N} \sum_{j \in \mathds{N} : E_N^{j,\omega} \le \epsilon} n_N^{j,\omega} \\
 & = \rho^{-1} \left[ \rho - \liminf\limits_{N \to \infty} \int\limits_{\left( \epsilon, \infty \right)} \mathcal B(E - \mu_N^{\omega}) \, \mathrm{d} \mathcal N_N^{\omega}(E) \right] \ .
	\end{align*}
Finally, with Lemma~\ref{Lemma 2 Gen BEC} and taking $\inf_{\epsilon}$ on both sides we arrive at
	\begin{align*} 
 \limsup\limits_{N \to \infty} \frac{1}{N} \sum_{j =1}^{c} n_N^{j,\omega} \le \dfrac{\rho - \rho_c(\beta)}{\rho}=\frac{\rho_0(\beta)}{\rho} \ ,
	\end{align*}
see also Remark~\ref{assumptions remark}.
 \end{proof}

 In the rest of this appendix, we are concerned with the Luttinger--Sy model, see Section~\ref{section Luttinger Sy infinite strength}. Recall that
 $(\hat x_j(\omega))_{j\in \mathds Z}$ with $\ldots < \hat x_{-1}(\omega) < \hat x_0(\omega) < 0 < \hat x_1(\omega) < \hat x_2(\omega) < \ldots$ are the points generated by a Poisson point process of intensity $\nu > 0$ and $\kappa_N^{\omega}$ is the number of Poisson points within the window $(-L_N/2,L_N/2)$. Furthermore, $\ljomega = |(\hat x_j(\omega), \hat x_{j+1}(\omega)|$ and $l_N^{j,\omega} =|(\hat x_j(\omega), \hat x_{j+1}(\omega)) \cap \Lambda_N|$. For the convenience of the reader, we present the next two lemmata with a proof, although more general versions of them can be found in \cite[Appendix C]{KPS18}.

	\begin{lemma} \label{Theorem k LN to nu}
For all $\epsilon > 0$ and for $\mathds P$-almost all $\omega \in \Omega$ there exists an $\widetilde N \in \mathds{N}$ such that for all $N \ge \widetilde N$ 
 \begin{align}\label{first part of Theorem k LN to nu}
(1 - \epsilon) \nu L_N < \kappa_N^{\omega} < (1 + \epsilon) \nu L_N \ . 
 \end{align}
 In particular, we $\mathds P$-almost surely have
 \begin{align*}
 \lim_{N \to \infty} \dfrac{\kappa_N^{\omega}}{L_N} = \nu \ . 
 \end{align*}
\end{lemma}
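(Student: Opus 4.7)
The plan is to reduce the statement to a standard large-deviations estimate for a Poisson distributed random variable combined with a Borel--Cantelli argument, exploiting that $L_N = N/\rho$ grows linearly in $N$.

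First, I would record the basic distributional input: by the defining property of a Poisson point process of intensity $\nu$ on $\mathds R$ (see \cite[Chapter 4]{kingman1993poisson}), the random variable $\kappa_N^{\omega}$, counting the number of Poisson points inside $\Lambda_N = (-L_N/2,L_N/2)$, is Poisson distributed with parameter $\nu L_N$. In particular $\mathds E\,\kappa_N^{\omega} = \nu L_N$, which makes \eqref{first part of Theorem k LN to nu} a quantitative law-of-large-numbers statement.

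Next, I would apply a Chernoff-type bound for Poisson variables: for any $\epsilon \in (0,1)$ there exists a constant $c(\epsilon)>0$ such that, for a Poisson variable $X$ with parameter $\lambda > 0$,
\begin{equation*}
\mathds P\bigl(|X - \lambda| \ge \epsilon \lambda\bigr) \le 2\,\mathrm{e}^{-c(\epsilon)\lambda}\ .
\end{equation*}
Specialized to $\lambda = \nu L_N = \nu N/\rho$ this yields
\begin{equation*}
\mathds P\bigl(|\kappa_N^{\omega} - \nu L_N| \ge \epsilon \nu L_N\bigr) \le 2\,\mathrm{e}^{-c(\epsilon)\nu N/\rho}\ ,
\end{equation*}
and the right-hand side is summable over $N \in \mathds{N}$.

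Finally, I would let $A_N^\epsilon$ denote the event that \eqref{first part of Theorem k LN to nu} fails and invoke the Borel--Cantelli lemma: summability implies $\mathds P(\limsup_N A_N^\epsilon) = 0$, so for $\mathds P$-almost all $\omega \in \Omega$ there is $\widetilde N(\omega,\epsilon) \in \mathds{N}$ with \eqref{first part of Theorem k LN to nu} holding for all $N \ge \widetilde N$. Intersecting over a sequence $\epsilon = 1/k$, $k \in \mathds{N}$, then gives $\mathds P$-almost surely $\kappa_N^{\omega}/L_N \to \nu$. I do not expect a serious obstacle here; the only point that deserves care is that Borel--Cantelli is applied to the integer-indexed family $(A_N^\epsilon)_{N\in\mathds N}$, which is fine precisely because $L_N$ grows (at least) linearly in $N$, making the Chernoff bounds summable without any further subsequence argument or interpolation.
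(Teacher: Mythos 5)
Your proposal is correct and follows essentially the same route as the paper: the paper's inequalities $\mathds P(\kappa_N^{\omega}\ge\theta\nu L_N)\le \mathrm{e}^{-\nu L_N(1-\theta+\theta\ln\theta)}$ (and the analogous lower-tail bound), applied with $\theta=1\pm\epsilon$, are exactly the Chernoff-type Poisson estimates you invoke, and both arguments then conclude via summability in $N$ (using $L_N=N/\rho$) and the Borel--Cantelli lemma, with the almost-sure limit obtained by letting $\epsilon\to 0$ along a countable sequence. No gaps.
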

\begin{proof}
Firstly, we note that $1 - \theta + \theta \ln(\theta) > 0$ for any $\theta \in (0,\infty) \backslash \{1\}$ as well as that for arbitrary $N \in \mathds{N}$,
	\begin{align} \label{zxkjzxckhj 1}
	\mathds{P}\left( \kappa_N^{\omega} \ge \theta \nu L_N \right) \le \mathrm{e}^{- \nu L_N (1 - \theta+ \theta \ln \theta)}
	\end{align}
	if $\theta \ge 1$ and 
	\begin{align} \label{zxkjzxckhj 2}
	\mathds{P}\left( \kappa_N^{\omega} \le \theta \nu L_N \right) \le \mathrm{e}^{- \nu L_N (1 - \theta+ \theta\ln \theta)}
	\end{align}
	for $0 < \theta\le 1$, see \cite[Section 3.3.2]{SeiYngZag12}. The first part of this theorem now follows with the Borel--Cantelli lemma. 
	
	Furthermore, using \eqref{first part of Theorem k LN to nu} we conclude
 \begin{align*}
 \liminf\limits_{N \to \infty} \dfrac{\kappa_N^{\omega}}{L_N} \ge (1 - \epsilon)\nu \quad \text{ and } \quad \limsup\limits_{N \to \infty} \dfrac{\kappa_N^{\omega}}{L_N} \le (1 + \epsilon)\nu
 \end{align*}
 $\mathds P$-almost surely. Since $\epsilon > 0$ can be chosen arbitrarily, the last statement of this theorem is also shown.
\end{proof}

\begin{lemma} \label{Lemma hat lj equal lj 1}
For all $0 < \epsilon<1$ and for $\mathds P$-almost all $\omega \in \Omega$ there exists an $\widetilde N \in \mathds{N}$ such that for all $N \ge \widetilde N$,
$$\lNjomega \begin{cases}
 = \ljomega \quad & \text{ if } j \in J_{\lceil (1 - \epsilon) \nu L_N/2 \rceil} \\
 \le \ljomega\quad & \text{ if } j \in J_{\lfloor \nu L_N \rfloor} \\
 = 0 \quad & \text{ if } j \in \mathds Z \backslash ( J_{\lfloor \nu L_N \rfloor} \cup \{0\}) 
 \end{cases} \ , $$
 where $J_k := \{-k, -k+1,\ldots,k-1,k\} \backslash\{0\}$, $k \in \mathds N$.
\end{lemma}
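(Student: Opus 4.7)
The middle bullet $\lNjomega\le \ljomega$ is immediate from $I_N^{j,\omega}\subseteq (\hat x_j,\hat x_{j+1})$, so I would focus on the first and third bullets. My plan is to reduce both to concentration statements for the two one-sided counts
\[
 \kappa_N^{+,\omega}:=\left|\{j\ge 1 : \hat x_j(\omega)<L_N/2\}\right|, \qquad \kappa_N^{-,\omega}:=\left|\{j\le 0 : \hat x_j(\omega)>-L_N/2\}\right|,
\]
which, by the restriction property of the Poisson process, are independent $\mathrm{Poisson}(\nu L_N/2)$ random variables. Write $k_\epsilon:=\lceil (1-\epsilon)\nu L_N/2 \rceil$. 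For $1\le j\le k_\epsilon$ the inclusion $\hat x_j>0>-L_N/2$ is automatic, so the equality $\lNjomega=\ljomega$ is equivalent to $\hat x_{j+1}\le L_N/2$, i.e.\ to $\kappa_N^{+,\omega}\ge j+1$; asking this for all such $j$ simultaneously reduces to $\kappa_N^{+,\omega}\ge k_\epsilon+1$, and the symmetric condition $\kappa_N^{-,\omega}\ge k_\epsilon+1$ covers negative indices. For the third bullet, $\lNjomega=0$ with $j>\lfloor\nu L_N\rfloor$ amounts to $\hat x_j\ge L_N/2$, which holds provided $\kappa_N^{+,\omega}\le\lfloor\nu L_N\rfloor$; the analogue on the left uses $\kappa_N^{-,\omega}\le\lfloor\nu L_N\rfloor$.

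Once this reduction is in place, each of the four inequalities is a routine Poisson tail estimate. I would invoke exactly the bounds recalled in the proof of Lemma~\ref{Theorem k LN to nu}, namely $\mathds P(X\ge \theta\lambda)\le e^{-\lambda(1-\theta+\theta\ln\theta)}$ for $\theta\ge 1$ and the symmetric inequality for $0<\theta\le 1$, applied to $X=\kappa_N^{\pm,\omega}\sim\mathrm{Poisson}(\nu L_N/2)$. Taking $\theta$ slightly above $1$ for the upper tails (which is comfortable, since $\lfloor\nu L_N\rfloor$ is asymptotically twice the mean) and $\theta=1-\epsilon$ for the lower tails yields deviation probabilities bounded by $e^{-c(\epsilon)L_N}$ with some $c(\epsilon)>0$. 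Since $L_N=N/\rho$, these probabilities are summable in $N$, so the Borel--Cantelli lemma forces each of the four inequalities to hold for all but finitely many $N$, $\mathds P$-almost surely. Intersecting the four almost-sure events and letting $\widetilde N(\omega)$ be the largest of the associated indices delivers the assertion.

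The main obstacle I anticipate is bookkeeping rather than analysis: keeping the indexing convention $\cdots<\hat x_0<0<\hat x_1<\cdots$ straight, translating "the $k$th positive Poisson point lies below $L_N/2$" into "$\kappa_N^{+,\omega}\ge k$," and verifying that the boundary indices $j=\pm\lfloor\nu L_N\rfloor$ fall in the permissive middle bullet (where only $\lNjomega\le\ljomega$ is asserted) rather than in the strict first or third one. The analytic content, by contrast, is entirely absorbed by the Poisson concentration estimate already deployed in Lemma~\ref{Theorem k LN to nu}, so no new ingredients beyond careful case accounting are needed.
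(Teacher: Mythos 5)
Your proposal is correct and follows essentially the same route as the paper: the paper likewise reduces the claim to two-sided concentration of the numbers of Poisson points in the half-windows $(-L_N/2,0]$ and $[0,L_N/2)$ (each $\mathrm{Poisson}(\nu L_N/2)$), obtained from the same tail estimates via Borel--Cantelli as in Lemma~\ref{Theorem k LN to nu}. If anything, you spell out the index bookkeeping more explicitly than the paper, which leaves the final deduction from the half-window counts largely implicit.
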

\begin{proof}
Recall that $I_N^{j,\omega} = (\hat x_j(\omega),\hat x_{j+1}(\omega)) \cap \Lambda_N$. Clearly, $\lNjomega \le \ljomega$ for all $j \in \mathds Z$, $N \in \mathds{N}$, and $\omega \in \Omega$. In addition, we have $\lNjomega = \ljomega$ for all $j \in \mathds Z$, $N \in \mathds{N}$, and $\omega \in \Omega$ for which $I_N^{j,\omega} \subset \Lambda_N$, that is, for which $I_N^{j,\omega}$ is entirely within the window $\Lambda_N$. Similarly, we have $\lNjomega = 0$ for every $j \in \mathds Z$, $N \in \mathds{N}$, and $\omega \in \Omega$ for which $I_N^{j,\omega} \cap \Lambda_N = \emptyset$, that is, for which $I_N^{j,\omega}$ is entirely outside the window $\Lambda_N$. 

We now determine when either of the last two cases holds. By $\kappa_N^{(1),\omega} \in \mathds{N}$, we denote the number of atoms of the Poisson random measure within $(-L_N/2,0]$. Similarly, the number of atoms of the Poisson random measure within $[0,L_N/2)$ shall be $\kappa_N^{(2),\omega} \in \mathds{N}$.
Let an arbitrary $0 < \epsilon < 1$ be given. Since there are not $\kappa_N^{\omega}$ but $\kappa_N^{\omega} + 1$ many intervals within $\Lambda_N$, we then pick an arbitrary $0 < \epsilon' < \epsilon$. With an appropriate version of Lemma~\ref{Theorem k LN to nu} we then conclude that for $\mathds P$-almost all $\omega \in \widetilde \Omega$ there exists an $\widetilde N \in \mathds{N}$ such that for all $N \ge \widetilde N$,
 $$\dfrac{1}{2}(1 - \epsilon')\nu L_N \le \kappa_N^{(1),\omega} \le \dfrac{1}{2}(1 + \epsilon') \nu L_N$$
 and
 $$\dfrac{1}{2}(1 - \epsilon')\nu L_N \le \kappa_N^{(2),\omega} \le \dfrac{1}{2}(1 + \epsilon') \nu L_N \ .$$ 
\end{proof}

\begin{lemma} \label{lemma llargest to infty almost surely}
	For all $0 < \epsilon < 1$ there exists a set $\widetilde \Omega \subset \Omega$ with $\mathds P(\widetilde \Omega) = 1$ and the following property: For every $\omega \in \widetilde \Omega$ there exists an $\widetilde N = \widetilde N(\epsilon,\omega)$ such that for all $N \ge \widetilde N$ one has
	\begin{align*}
	\llargest \ge \nu^{-1} \Big[ \ln( L_N) - (1 + \epsilon) \ln ( \ln(L_N)) \Big] \ .
	\end{align*}
\end{lemma}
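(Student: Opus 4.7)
The plan is to reduce the lower bound on $\llargest$ to a lower bound on the maximum of roughly $\nu L_N$ independent $\mathrm{Exp}(\nu)$ random variables and then to apply Borel--Cantelli. This matches the classical asymptotic $\nu^{-1}\ln n$ for the max of $n$ i.i.d.~exponentials, and the correction $-(1+\epsilon)\ln\ln L_N$ is precisely what will make the tail events summable in $N$.

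Fix $0 < \epsilon < 1$. First I would invoke Lemma~\ref{Lemma hat lj equal lj 1} with, say, $\epsilon' = 1/2$ to obtain a $\mathds P$-almost sure event on which, for every $N$ larger than some $\widetilde N_0(\omega)$, one has $\lNjomega = \ljomega$ for all $j \in J_{m_N}$, where $m_N := \lceil \nu L_N/4 \rceil$. Setting
$$Y_N := \max_{j \in J_{m_N}} \ljomega \ ,$$
this gives $\llargest \ge Y_N$ for such $\omega$ and $N$. Since the $(\ljomega)_{j \in \mathds Z \setminus \{0\}}$ are i.i.d.~$\mathrm{Exp}(\nu)$ with density $\nu \mathrm{e}^{-\nu l}$, $Y_N$ has the same distribution as the maximum of $2 m_N \ge \nu L_N/2$ i.i.d.~$\mathrm{Exp}(\nu)$ variables.

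Next, with $t_N := \nu^{-1}\bigl[\ln(L_N) - (1+\epsilon)\ln(\ln(L_N))\bigr]$, one computes $\mathrm{e}^{-\nu t_N} = L_N^{-1}(\ln L_N)^{1+\epsilon}$, so the elementary bound $(1-x)^n \le \mathrm{e}^{-nx}$ yields
$$\mathds P(Y_N < t_N) = \bigl(1 - L_N^{-1}(\ln L_N)^{1+\epsilon}\bigr)^{2 m_N} \le \exp\!\bigl(-\tfrac{\nu}{2}(\ln L_N)^{1+\epsilon}\bigr)$$
for all sufficiently large $N$. Because $L_N = N/\rho$ grows linearly in $N$, this upper bound is of the form $\mathrm{e}^{-c(\ln N)^{1+\epsilon}}$ with $c>0$, and is therefore summable in $N$. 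The Borel--Cantelli lemma then delivers a further $\mathds P$-almost sure event on which $Y_N \ge t_N$ for all $N$ beyond some $\widetilde N_1(\omega)$, and setting $\widetilde N := \max(\widetilde N_0,\widetilde N_1)$ completes the argument.

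The only genuine ingredient is the inner-interval reduction supplied by Lemma~\ref{Lemma hat lj equal lj 1}, which replaces the window-dependent lengths $\lNjomega$ by the honestly i.i.d.~lengths $\ljomega$; once this reduction is in place, the exponential tail estimate is routine, and the logarithmic correction in $t_N$ is exactly what is needed to beat the $1/L_N$ decay in $\mathrm{e}^{-\nu t_N}$ on an $L_N$-large sample.
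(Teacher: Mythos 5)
Your proposal is correct and follows essentially the same route as the paper's proof: reduce $\llargest$ to the maximum of order $\nu L_N$ i.i.d.\ $\mathrm{Exp}(\nu)$ lengths via Lemma~\ref{Lemma hat lj equal lj 1}, bound the tail probability by $(1-L_N^{-1}(\ln L_N)^{1+\epsilon})^{c\nu L_N}\le \mathrm e^{-c'(\ln L_N)^{1+\epsilon}}$, and conclude by summability and Borel--Cantelli. The only cosmetic difference is your choice of $m_N=\lceil \nu L_N/4\rceil$ (from taking $\epsilon'=1/2$ in the reduction lemma) versus the paper's $\lceil(1-\epsilon)\nu L_N/2\rceil$, which does not affect the argument.
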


\begin{proof}
 Let $0 < \epsilon < 1$ be arbitrarily given. As a first step, we conclude that
		 		\begin{align*}
	 		& \mathds{P} \Bigg( \max \left\{ \ljomega : j \in J_{\lceil (1-\epsilon) \nu L_N / 2 \rceil} \right\} < \nu^{-1} \Big[ \ln(L_N) - ( 1 + \epsilon) \ln [ \ln (L_N) ] \Big] \Bigg)\\
		 		& \quad \le \left( 1 - \dfrac{[ \ln(L_N)]^{1 + \epsilon}}{L_N} \right)^{2 \lceil (1 - \epsilon) \nu L_N/ 2 \rceil} \ .
		 		\end{align*}
		 		Here, we used the fact that $\{ \ljomega : j \in \mathds Z \backslash \{0\} \}$ are mutually independent, exponentially distributed random variables.
		 		With the inequality $\ln(1 - x) \le -x$ for all $0 < x < 1$, we obtain the bound
		 		\begin{align*}
		 		2 \left\lceil \dfrac{(1 - \epsilon) \nu L_N}{2} \right\rceil \cdot \ln \left[ \left( 1 - \dfrac{[\ln(L_N)]^{1 + \epsilon}}{L_N} \right) \right] & \le - (1 - \epsilon) \nu [ \ln(L_N)]^{1 + \epsilon} \le - 2 \ln(N) 
		 		\end{align*}
		 		for all but finitely many $N \in \mathds{N}$. Therefore, we have
		 		\begin{align*}
		 		 \sum\limits_{N = 1}^{\infty} \left( 1 - \dfrac{[ \ln(L_N)]^{1 + \epsilon}}{L_N} \right)^{2 \lceil (1 - \epsilon) \nu L_N \rceil / 2} < \infty
		 		\end{align*}
		 		and, consequently, also
		 		\begin{align*}
		 		\sum\limits_{N = 1}^{\infty} \mathds{P} \Bigg( \max \left\{ \ljomega : j \in J_{\lceil (1-\epsilon) \nu L_N / 2 \rceil} \right\} < \nu^{-1} \Big( \ln(L_N) - ( 1 + \epsilon) \ln [ \ln (L_N) ] \Big) \Bigg) < \infty \ .
		 		\end{align*}
		 		By using Borel--Cantelli's lemma, 
		 		it now follows that there exists a set $\widetilde \Omega_1 \subset \Omega$ with $\mathds P(\widetilde \Omega_1) = 1$ and the following property. For every $\omega \in \widetilde \Omega_1$ there exists an $\widetilde N_1=\widetilde N_1(\epsilon,\omega) \in \mathds{N}$ such that
		 		\begin{align*}
		 		\max \left\{ \ljomega : j \in J_{\lceil (1-\epsilon) \nu L_N / 2 \rceil} \right\} \ge \nu^{-1} \Big\{ \ln( L_N) - (1 + \epsilon) \ln [ \ln(L_N)] \Big\}
		 		\end{align*}
		 		for all $N \ge \widetilde N_1$.
		 		
		 		On the other hand, according to Lemma~\ref{Lemma hat lj equal lj 1} there exists also a set $\widetilde \Omega_2 \subset \Omega$ with $\mathds P(\widetilde \Omega_2) = 1$ and the following property. For every $\omega \in \widetilde \Omega_2$ there exists an $\widetilde N_2 = \widetilde N_2(\epsilon, \omega) \in \mathds{N}$ such that for all $N \ge \widetilde N_2$ it is
		 		\begin{align*}
		 		\Big\{ \ljomega : j \in J_{\lceil (1-\epsilon) \nu L_N / 2 \rceil} \Big\} \subset \Big\{ \lNjomega : j \in \mathds Z \backslash \{0\} \Big\}\backslash \{0\} \ .
		 		\end{align*}
		 		and therefore
		 		\begin{align*}
		 		\max\Big\{ \ljomega : j \in J_{\lceil (1-\epsilon) \nu L_N / 2 \rceil} \Big\} \le \max\Big\{ \lNjomega : j \in \mathds Z \backslash \{0\} \Big\} = \llargest \ .
		 		\end{align*}

		 		Altogether, we thus have shown that for every $\omega \in \widetilde \Omega_1 \cap \widetilde \Omega_2$ there exists an $\widetilde N := \max\{ \widetilde N_1, \widetilde N_2\}$ such that for all $N \ge \widetilde N$ we have
		 				 		\begin{align*}
		 		\llargest \ge \max \left\{ \ljomega : j \in J_{\lceil (1-\epsilon) \nu L_N / 2 \rceil} \right\} \ge \nu^{-1} \Big\{ \ln( L_N) - (1 + \epsilon) \ln [ \ln(L_N)] \Big\} \ .
		 		\end{align*}
		 		Since $\mathds P(\widetilde \Omega_1 \cap \widetilde \Omega_2) = 1$, we have proved this theorem.
		 	\end{proof}

For the next lemma recall the definition of $\llargesttilde$ and $\llargesttildezwei$ in the proof of Theorem~\ref{Theorem gap condition LSmodel}.
		 	
\begin{lemma}\label{Theorem Abstand grosstes und viertgrosstes Intervall}
 For any $a > 0$ and $0 < \hat \eta < 1/2$, we have
 \begin{align} \label{Abstand grosstes und viertgrosstes Intervall}
\lim\limits_{N \to \infty} \mathds P \left( \llargesttilde - \llargesttildezwei > \dfrac{a}{N^{1-\hat \eta}}\right)= 1 \ .
\end{align}
\end{lemma}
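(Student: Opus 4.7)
The plan is to compute the conditional distribution of the gap between the two largest among all $k+1$ spacings inside $\Lambda_N$ via the standard exponential representation of uniform spacings, and then to show that with probability tending to $1$ this gap coincides with $\llargesttilde - \llargesttildezwei$. Fix a small $0<\epsilon<1/2$ and let $A_N := \{(1-\epsilon)\nu L_N \le \kappa_N^\omega \le (1+\epsilon)\nu L_N\}$, so that $\mathds P(A_N)\to 1$ by Lemma~\ref{Theorem k LN to nu}. Conditional on $\kappa_N^\omega = k$, the $k$ Poisson points are i.i.d.\ uniform on $\Lambda_N$, and the induced $k+1$ spacings are jointly distributed as $L_N(Y_1/S,\ldots,Y_{k+1}/S)$ with $Y_i$ i.i.d.\ $\mathrm{Exp}(1)$ and $S = \sum_{i=1}^{k+1} Y_i$. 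Writing $G_{(1)} \le \cdots \le G_{(k+1)}$ for the ordered spacings, R\'enyi's representation gives $Y_{(k+1)} - Y_{(k)} = Z_{k+1}$ and $S = Z_1 + \cdots + Z_{k+1}$ with $Z_j$ i.i.d.\ $\mathrm{Exp}(1)$, so the moment generating function of $Z_1 + \cdots + Z_k$ yields
\begin{equation*}
\mathds P\!\left(G_{(k+1)} - G_{(k)} > t \,\big|\, \kappa_N^\omega = k\right) \;=\; \left(1 - \frac{t}{L_N}\right)^{k}, \qquad 0 < t < L_N.
\end{equation*}
Inserting $t = a/N^{1-\hat\eta}$ and $L_N = N/\rho$ and using $k \le (1+\epsilon)\nu L_N$ on $A_N$, this conditional probability is bounded below by $(1 - a\rho/N^{2-\hat\eta})^{(1+\epsilon)\nu N/\rho}$, which tends to $1$ because $\hat\eta < 1$ makes the resulting exponent vanish.

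It remains to show that, with probability tending to $1$, $\llargesttilde - \llargesttildezwei = G_{(k+1)} - G_{(k)}$. The interior non-origin spacings are obtained from the full list of $k+1$ by removing the three lengths $|I_N^{\mathrm R,\omega}|$, $|I_N^{\mathrm L,\omega}|$, and $|I_N^{0,\omega}|$. On the event $B_N := \Omega_N^{(\mathrm R)} \cap \Omega_N^{(\mathrm L)} \cap \Omega_N^{(0)}$ from the proof of Theorem~\ref{Theorem gap condition LSmodel}, each of these three is at most $\tfrac{1}{2\nu}\ln N$, and $\mathds P(B_N) \to 1$. On the other hand, the sub-collection $\{\ljomega : j \in J_{\lceil (1-\epsilon)\nu L_N/2\rceil}\}$ consists of roughly $(1-\epsilon)\nu L_N$ i.i.d.\ $\mathrm{Exp}(\nu)$ random variables, and the number of them exceeding $\tfrac{1}{2\nu}\ln N$ is Binomial with mean of order $\sqrt{N} \to \infty$; Chebyshev's inequality shows that at least two of them exceed $\tfrac{1}{2\nu}\ln N$ with probability tending to $1$. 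By Lemma~\ref{Lemma hat lj equal lj 1}, this sub-collection lies inside the interior non-origin spacings with probability tending to $1$, so $\llargesttildezwei \ge \tfrac{1}{2\nu}\ln N$ on the intersection of these events. Since removing elements from a multiset can only lower its second-largest value, we also obtain $G_{(k)} \ge \llargesttildezwei \ge \tfrac{1}{2\nu}\ln N$; combined with $B_N$ this forces $G_{(k+1)}$ and $G_{(k)}$ to be interior non-origin and therefore to coincide with $\llargesttilde$ and $\llargesttildezwei$, respectively.

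The principal technical step is the R\'enyi moment-generating-function computation of the first paragraph, which gives the exact law $(1 - t/L_N)^k$ and is the only non-bookkeeping part of the argument. The matching of the top-two gaps in the second paragraph relies purely on Lemmas~\ref{Theorem k LN to nu} and \ref{Lemma hat lj equal lj 1} together with a Borel--Cantelli estimate for the outer and origin intervals of exactly the same flavor used in the proof of Theorem~\ref{Theorem gap condition LSmodel}.
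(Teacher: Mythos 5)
Your proposal is correct, but it takes a genuinely different route from the paper's. The paper never conditions on the point count when computing the law of the gap: it works with $k$ i.i.d.\ $\mathrm{Exp}(\nu)$ lengths, evaluates the joint density of the two largest order statistics by a double integral to get the exact value $\mathds P\bigl(\llargestk{1}-\llargestk{2}\ge a(k+1)^{-1+\hat\eta}\bigr)=\e^{-\nu a (k+1)^{-1+\hat\eta}}$, and then intersects with $\{\kappa_N^{\omega}=k\}$ via $\mathds P(A\cap B)\ge \mathds P(A)+\mathds P(B)-1$ and sums over $k$ in the concentration window $(1\pm L_N^{-\epsilon})\nu L_N$; the constraint $\hat\eta<\epsilon<1/2$ is used there to make the accumulated error $\sum_k\nu a(k-1)^{-1+\hat\eta}$ vanish. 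You instead condition on $\kappa_N^{\omega}=k$, represent the $k+1$ window spacings as normalized uniform spacings $L_N(Y_i/S)$, and extract the exact conditional law $(1-t/L_N)^{k}$ of the top-two gap via R\'enyi's representation and a Gamma moment generating function --- a clean computation that makes visible that the gap estimate itself only needs $\hat\eta<1$, the bound $\hat\eta<1/2$ entering solely through the Poisson count concentration. The price of conditioning is that your top-two gap refers to \emph{all} $k+1$ spacings, so you must add the matching step showing that with probability tending to one the two largest spacings are interior and non-origin, hence equal to $\llargesttilde$ and $\llargesttildezwei$; your argument for this (the three excluded lengths are at most $\tfrac{1}{2\nu}\ln N$ on $\Omega_N^{(\mathrm R)}\cap\Omega_N^{(\mathrm L)}\cap\Omega_N^{(0)}$, while a Chebyshev bound on a binomial with mean of order $\sqrt N$ produces at least two interior non-origin spacings strictly exceeding $\tfrac{1}{2\nu}\ln N$) is sound and in fact makes explicit an identification that the paper's version leaves implicit by working directly with a sub-collection of the i.i.d.\ lengths $\ljomega$. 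Both arguments are valid; yours is somewhat longer but more self-contained in how it ties the abstract order-statistics computation to the actual random variables $\llargesttilde,\llargesttildezwei$ on $\Omega$.
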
 
\begin{proof}
For all $k \in \mathds{N}$, let $\{ l^{j,\omega}\}_{j=1}^{k}$ be a set of $k$ independent and identically distributed random variables with common probability density $\nu \e^{-\nu l}$. For all $k \ge 2$, we define $\llargestk{1}$ and $\llargestk{2}$ as the largest and the second largest element of the set $\{ l^{j,\omega}\}_{j=1}^{k}$, respectively.
Then for any $2 \le k \in \mathds N$ and with
$$\widetilde \Omega_k := \left\{ \ \llargestk{1} - \llargestk{2} \ge \dfrac{a}{(k+1)^{1-\hat \eta}} \right\}$$
one has, see also \cite[Section 6.3]{LenobleZagrebnovLuttingerSy}, 
\begin{align*}
\mathds P(\widetilde \Omega_k) & = k(k-1) \int\limits_{a (k+1)^{-1+\hat \eta}}^{\infty} \int\limits_0^{x-a (k+1)^{-1+\hat \eta}} (1-e^{-\nu y})^{k-2} \, \nu \e^{-\nu x} \, \nu \e^{-\nu y}  \, \mathrm{d} y \, \mathrm{d} x\\
& = k \int\limits_{a (k+1)^{-1+\hat \eta}}^{\infty} \left( 1 - \e^{-\nu(x - a (k+1)^{-1 + \hat \eta})} \right)^{k-1} \nu \e^{-\nu x} \, \mathrm{d} x \\
& = \e^{- \nu a (k+1)^{-1+\hat \eta}} \ .
\end{align*}
We hence conclude that, see also \cite[Section 3.3.3]{SeiYngZag12}, 
\begin{align*}
& \lim\limits_{N \to \infty} \mathds P \left( \llargesttilde - \llargesttildezwei > \dfrac{a}{(\kappa_N^{\omega} + 1)^{1-\hat \eta}} \right) \\
& \quad \ge \lim\limits_{N \to \infty} \sum\limits_{k=(1-L_N^{-\epsilon})\nu L_N}^{(1+L_N^{-\epsilon})\nu L_N}\mathds P \left( \left\{ \llargestkminuszwei{1} - \llargestkminuszwei{2} > \dfrac{a}{(k + 1)^{1-\hat \eta}} \right\} \cap \left\{ \kappa_N^{\omega} = k \right\} \right) \\
& \quad \ge \lim\limits_{N \to \infty} \sum\limits_{k=(1-L_N^{-\epsilon})\nu L_N}^{(1+L_N^{-\epsilon})\nu L_N}\left[ \mathds P ( \widetilde \Omega_{k-2} ) + \mathds P \left( \kappa_N^{\omega} = k \right) - 1 \right] \\
& \quad = 1
\end{align*}
where $\hat \eta < \epsilon < 1/2$.
Here, we used the fact that
\begin{align*}
\lim\limits_{N \to \infty} \sum\limits_{k=(1-L_N^{-\epsilon})\nu L_N}^{(1+L_N^{-\epsilon})\nu L_N}\mathds P ( \kappa_N^{\omega} = k ) = 1 \ .
\end{align*}
since, see inequalities~\eqref{zxkjzxckhj 1} and~\eqref{zxkjzxckhj 2},
\begin{align*}
 \lim\limits_{N \to \infty} \mathds P \left( \kappa_N^{\omega} < (1 - L_N^{-\epsilon})\nu L_N \right) = \lim\limits_{N \to \infty} \mathds P \left( \kappa_N^{\omega} > (1 + L_N^{-\epsilon})\nu L_N \right) = 0 \ .
\end{align*}
In addition, 
\begin{align*}
\left| \lim\limits_{N \to \infty} \sum\limits_{k=(1-L_N^{-\epsilon})\nu L_N}^{(1+L_N^{-\epsilon})\nu L_N}\left[ \mathds P ( \widetilde \Omega_{k-2} ) - 1 \right] \right| & \le \lim\limits_{N \to \infty} \sum\limits_{k=(1-L_N^{-\epsilon})\nu L_N}^{(1+L_N^{-\epsilon})\nu L_N}\left| \mathds P ( \widetilde \Omega_{k-2} ) - 1 \right| \\
& \quad \le \lim\limits_{N \to \infty} \sum\limits_{k=(1-L_N^{-\epsilon})\nu L_N}^{(1+L_N^{-\epsilon})\nu L_N} \nu a(k-1)^{-1+\hat \eta} \\
& \quad \le \lim\limits_{N \to \infty} \nu a \dfrac{2 L_N^{-\epsilon} \nu L_N}{\big[ (1-L_N^{-\epsilon})\nu L_N - 1\big]^{1 -\hat \eta}} \\
& \quad \le \lim\limits_{N \to \infty} \nu a \dfrac{2 \nu L_N^{1 - \epsilon}}{\left( \dfrac{1}{2} \nu L_N \right)^{1 - \hat \eta}} \\
& \quad = 0 \ ,
\end{align*}
since $\hat \eta < \epsilon$. Lastly, again due to inequality~\eqref{zxkjzxckhj 2},

\begin{align*}
 & \lim\limits_{N \to \infty} \mathds P \left( \llargesttilde - \llargesttildezwei > \dfrac{a}{(2 \nu \rho^{-1} N)^{1-\hat \eta}} \right) \\
 & \quad \ge \lim\limits_{N \to \infty} \mathds P \left( \left\{ \ \llargesttilde - \llargesttildezwei > \dfrac{a}{(\kappa_N^{\omega} + 1)^{1-\hat \eta}} \right\} \cap \Big\{ \ \kappa_N^{\omega} \le 2\nu \rho^{-1}N - 1 \Big\} \right) \\
& \quad \ge \lim\limits_{N \to \infty} \mathds P \left( \llargesttilde - \llargesttildezwei > \dfrac{a}{(\kappa_N^{\omega} + 1)^{1-\hat \eta}} \right) + \lim\limits_{N \to \infty} \mathds P \left( \kappa_N^{\omega} \le 2 \nu \rho^{-1}N - 1 \right) - 1\\
& \quad = 1 \ .
\end{align*}
\end{proof}
		 	
\newpage
{\small
	\bibliographystyle{amsalpha}
	\bibliography{mybibfile}}

\providecommand{\bysame}{\leavevmode\hbox to3em{\hrulefill}\thinspace}
\providecommand{\MR}{\relax\ifhmode\unskip\space\fi MR }
\providecommand{\MRhref}[2]{%
  \href{http://www.ams.org/mathscinet-getitem?mr=#1}{#2}
}
\providecommand{\href}[2]{#2}
\begin{thebibliography}{KPS19b}

\bibitem[Bau01]{bauer2001measure}
H.~Bauer, \emph{{Measure and integration theory}}, de Gruyter, 2001.

\bibitem[Ber83]{van1983condensation}
{M.~van den} Berg, \emph{{On condensation in the free-boson gas and the
  spectrum of the Laplacian}}, J. Stat. Phys. \textbf{31} (1983), 623--637.

\bibitem[BL82]{van1982generalized}
{M.~van den} Berg and J.~T. Lewis, \emph{{On generalized condensation in the
  free boson gas}}, Physica A: Statistical Mechanics and its Applications
  \textbf{110} (1982), 550--564.

\bibitem[BLL86]{van1986general2}
{M.~van den} Berg, J.~T. Lewis, and M.~Lunn, \emph{{On the general theory of
  Bose--Einstein condensation and the state of the free boson gas}}, Helvetica
  Physica Acta \textbf{59} (1986), 1289--1310.

\bibitem[BLP86]{van1986general}
{M.~van den} Berg, J.~T. Lewis, and J.~V. Pul{\'e}, \emph{{A general theory of
  Bose--Einstein condensation}}, Helvetica Physica Acta \textbf{59} (1986),
  1271--1288.

\bibitem[Egg72]{EggarterSomeExact}
T.~P. Eggarter, \emph{{Some exact results on electron energy levels in certain
  one-dimensional random potentials}}, Phys. Rev. B \textbf{5} (1972),
  3863--3865.

\bibitem[FP92]{pastur1992spectra}
A.~Figotin and L.~A. Pastur, \emph{{Spectra of random and almost-periodic
  operators}}, Springer, 1992.

\bibitem[GHK05]{germinet2005localization}
F.~Germinet, P.~D. Hislop, and A.~Klein, \emph{{On localization for the
  Schr{\"o}dinger operator with a Poisson random potential}}, Comptes Rendus
  Mathematique \textbf{341} (2005), 525---528.

\bibitem[GHK07]{klein2007localization}
\bysame, \emph{{Localization for Schr{\"o}dinger operators with Poisson random
  potential}}, J. Eur. Math. Soc. \textbf{9} (2007), 577--607.

\bibitem[HS65]{hewitt1965real}
E.~Hewitt and K.~Stromberg, \emph{{Real and abstract analysis: a modern
  treatment of the theory of functions of a real variable}}, Springer, 1965.

\bibitem[Kin93]{kingman1993poisson}
J.~F.~C. Kingman, \emph{{Poisson processes}}, Clarendon Press, 1993.

\bibitem[KL73]{kac1973bose}
M.~Kac and J.~M. Luttinger, \emph{{Bose--Einstein condensation in the presence
  of impurities}}, J. Math. Phys. \textbf{14} (1973), 1626--1628.

\bibitem[KL74]{kac1974bose}
\bysame, \emph{{Bose--Einstein condensation in the presence of impurities.
  II}}, J. Math. Phys. \textbf{15} (1974), 183--186.

\bibitem[KPS19a]{KPS18}
J.~Kerner, M.~Pechmann, and W.~Spitzer, \emph{{Bose--Einstein condensation in
  the Luttinger--Sy model with contact interaction}}, Ann. Henri Poincar{\'e}
  \textbf{20} (2019), 2101--2134.

\bibitem[KPS19b]{KPS182}
\bysame, \emph{{On Bose--Einstein condensation in the Luttinger--Sy model with
  finite interaction strength}}, J. Stat. Phys. \textbf{174} (2019),
  1346--1371.

\bibitem[LMW03]{leschke2003survey}
H.~Leschke, P.~M{\"u}ller, and S.~Warzel, \emph{{A survey of rigorous results
  on random Schr{\"o}dinger operators for amorphous solids}}, Markov Processes
  and Related Fields \textbf{9} (2003), 729--760.

\bibitem[LPZ04]{lenoble2004bose}
O.~Lenoble, L.~A. Pastur, and V.~A. Zagrebnov, \emph{{Bose--Einstein
  condensation in random potentials}}, Comptes Rendus Physique \textbf{5}
  (2004), 129--142.

\bibitem[LS73a]{luttinger1973bose}
J.~M. Luttinger and H.~K. Sy, \emph{{Bose--Einstein Condensation in a
  one-dimensional model with random impurities}}, Phys. Rev. A \textbf{7}
  (1973), 712--720.

\bibitem[LS73b]{luttinger1973low}
\bysame, \emph{{Low-lying energy spectrum of a one-dimensional disordered
  system}}, Phys. Rev. A \textbf{7} (1973), 701--712.

\bibitem[LZ07]{LenobleZagrebnovLuttingerSy}
O.~Lenoble and V.~A. Zagrebnov, \emph{{Bose--{E}instein condensation in the
  {L}uttinger--{S}y model}}, Mark.~Proc.~Rel.~Fields \textbf{13} (2007),
  441--468.

\bibitem[Sto95]{stolz1995localization}
G.~Stolz, \emph{{Localization for random Schr{\"o}dinger operators with Poisson
  potential}}, Annales de l'IHP Physique th{\'e}orique \textbf{63} (1995),
  297--314.

\bibitem[SW16]{SeiringerWarzel}
R.~Seiringer and S.~Warzel, \emph{Decay of correlations and absence of
  superfluidity in the disordered {T}onks--{G}irardeau gas}, New Journal of
  Physics \textbf{18} (2016), no.~3, 035002.

\bibitem[SYZ12]{SeiYngZag12}
R.~Seiringer, J.~Yngvason, and V.~A. Zagrebnov, \emph{{Disordered
  {B}ose--{E}instein condensates with interaction in one dimension}}, J. Stat.
  Mech.: Theory and Experiment \textbf{2012} (2012), P11007.

\bibitem[Szn98]{sznitman1998brownian}
A.-S. Sznitman, \emph{{Brownian motion, obstacles and random media}}, Springer,
  1998.

\end{thebibliography}

\end{document}